\setlist[tablenotes]{label=\tnote{\alph*},ref=\alph*,itemsep=\z@,topsep=\z@skip,partopsep=\z@skip,parsep=\z@,itemindent=\z@,labelindent=\tabcolsep,labelsep=.2em,leftmargin=*,align=left,before={\footnotesize}}
\Crefname{scheme}{Scheme}{Schemes}
\Crefname{functionality}{Functionality}{Functionalities}
\newcommand{\F}{\ensuremath{\mathbb{F}}\xspace}
\newcommand{\G}{\ensuremath{\mathbb{G}}\xspace}
\newcommand{\N}{\ensuremath{\mathbb{N}}\xspace}
\newcommand{\R}{\ensuremath{\mathbb{R}}\xspace}
\newcommand{\Z}{\ensuremath{\mathbb{Z}}\xspace}
\newcommand{\he}{\ensuremath{\mathsf{HE}}}
\newcommand{\evk}{\ensuremath{\mathsf{evk}}}
\newcommand{\centercell}[1]{\multicolumn{1}{c}{#1}}
\newcommand{\centercellR}[1]{\multicolumn{1}{c|}{#1}}
\newcommand{\shake}{\textsc{SHAKE128}\xspace}
\renewcommand{\vec}[1]{\boldsymbol{#1}}
\newcommand*{\numero}{n\kern-.1em \raise.7ex\vbox{\hbox{\tiny \ensuremath{\circ}}\kern.5pt}}
\def\expandafter\UrlBreaks\expandafter{\UrlBreaks
    \do\a\do\b\do\c\do\d\do\e\do\f\do\g\do\h\do\i\do\j%
    \do\k\do\l\do\m\do\n\do\o\do\p\do\q\do\r\do\s\do\t%
    \do\u\do\v\do\w\do\x\do\y\do\z\do\A\do\B\do\C\do\D%
    \do\E\do\F\do\G\do\H\do\I\do\J\do\K\do\L\do\M\do\N%
    \do\O\do\P\do\Q\do\R\do\S\do\T\do\U\do\V\do\W\do\X%
    \do\Y\do\Z}
\newcommand{\ifc}[1]{\textsf{#1}}
\newcommand{\env}{\mathcal{E}}
\newcommand{\simu}{\mathcal{S}}
\newcommand{\diag}{\texttt{diag}}
\newcommand{\rot}{\texttt{rot}}
\tikzset{>=stealth}
\tikzset{font={\fontsize{9pt}{12}\selectfont}}
\title{\huge Privately Connecting Mobility to Infectious Diseases via Applied Cryptography}
\author{Alexandros Bampoulidis\inst{4} \and Alessandro Bruni\inst{3} \and Lukas Helminger\inst{1,2} \and Daniel Kales\inst{1} \and Christian Rechberger\inst{1} \and Roman Walch\inst{1,2}}
\institute{Graz University of Technology, Graz, Austria\\\email{firstname.lastname@iaik.tugraz.at} \and Know-Center GmbH, Graz, Austria \and Katholieke Universiteit Leuven, Leuven, Belgium\\\email{alessandro.bruni@kuleuven.be} \and Research Studio Data Science, RSA FG, Vienna, Austria\\\email{alex.bampoulidis@gmail.com}}
\begin{document}

\maketitle

\begin{abstract}
    {Recent work has shown that cell phone mobility data has the unique potential to create accurate models for human mobility and consequently the spread of infected diseases~\cite{wesolowski2012quantifying}.
        While prior studies have exclusively relied on a mobile network operator's subscribers' aggregated data in modelling disease dynamics, it may be preferable to contemplate aggregated mobility data of infected individuals only. Clearly, naively linking mobile phone data with health records would violate privacy by either allowing to track mobility patterns of infected individuals, leak information on who is infected, or both. This work aims to develop a solution that reports the aggregated mobile phone location data of infected individuals while still maintaining compliance with privacy expectations. To achieve privacy, we use homomorphic encryption, validation techniques derived from zero-knowledge proofs, and differential privacy. Our protocol's open-source implementation can process eight million subscribers in 70 minutes.}
\end{abstract}

\keywords{homomorphic encryption, COVID-19, mobile data, secure computation, differential privacy, infectious diseases}

\section{Introduction}\label{sec:intro}
Human mobility plays a crucial role in infectious disease dynamics. It leads to more contact between receptive and infected individuals and may introduce pathogens into new geographical regions. Both cases can be responsible for an increased prevalence or an outbreak of an infectious disease~\cite{wesolowski2016connecting}. In particular, human travel history has been shown to play a critical role in the propagation of infectious diseases like influenza~\cite{ferguson2005strategies} or measles~\cite{grenfell2001travelling}. Therefore, understanding the spatiotemporal dynamics of an epidemic is closely tied to understanding the movement patterns of infected individuals.

Until a few years ago, researchers had to rely on general data, such as relative distance and population distribution, to model human mobility. This model was then combined with a transmission model of a particular disease into an epidemiological model, which was then used to improve the understanding of the geographical spread of epidemics.
Mobile phones and their location data have the unique potential to improve these epidemiological models further. Indeed, recent work~\cite{wesolowski2012quantifying} has shown that substituting this inaccurate mobility data with mobile phone data leads to significantly more accurate models. Integrating such up-to-date mobility patterns allowed them to identify hotspots with a higher risk of contamination, enabling policymakers to apply focused measures.

While prior studies have exclusively relied on aggregated data of all mobile network operator's subscribers' it may be preferable to contemplate aggregated mobility data of infected individuals only. Indeed, a cholera study~\cite{finger2016mobile} observed that although their model succeeded in showing that some mass gatherings had major influences in the course of the epidemic, it performed less well when the cumulative incidence is low. They speculated that demographic stochasticity could be one reason for the bad performance of their model. In other words, the infected individuals' mobility pattern may not be precisely reflected by the population's mobility if the prevalence is low. To mitigate this problem, we propose the usage of infected individuals' mobile phone data, which should lead to an improvement in the predictive capabilities of epidemiological models, especially in highly dynamic situations.

\subsection{Privacy Goals}
Our system should report a heatmap of aggregated mobile phone location data of infected individuals without revealing an individual's location or whether an individual has been infected. To that end we combine various state-of-the-art privacy-preserving cryptographic primitives to design a two-party client-server protocol for which the epidemiological researcher or a health authority inputs patients' identifiers, and the mobile network operator (MNO) inputs its subscribers' location data.

Our solution should, thereby, be able to combine both datasets without leaking the inputs to the other party. Furthermore, no party should be able to gain any information on the other dataset by cheating during protocol evaluation, e.g., by providing malicious inputs. Even if both parties follow the protocol honestly, the resulting heatmap of aggregated location data can still leak sensitive information about individuals. Thus, our protocol must also prevent this inherent output leakage while still preserving the usefulness of the resulting heatmap.

\subsection{Roadmap}
In \Cref{sec:related_work}, we discuss the relevant related work. \Cref{sec:pre} provides the necessary preliminary definitions and notations. \Cref{sec:solution} first states the problem we want to solve in this article. It then gradually develops a solution by introducing privacy protection mechanisms step by step. In \Cref{sec:security}, we perform a dedicated security and privacy analysis of our solution. \Cref{sec:impl} elaborates on the implementation of our solution as well as demonstrating the performance. \Cref{sec:conlusion} concludes with a discussion about legal considerations for an actual roll-out and how multiple parties can be included. We defer to the appendix missing proofs of our security analysis (\Cref{appendix:security}) and additional material regarding differential privacy (\Cref{appendix:diffpriv}).

\section{Related Work}\label{sec:related_work}

Numerous research directions have previously sought to model the spread of infectious diseases. Most closely related to this paper is work connecting mobile phone data to infectious diseases.

\subsection{Mobility and Infectious Diseases}\label{sec:impact_human_mobility}

Mobile phone data provides an opportunity to model human travel patterns and thereby enhance understanding of infectious diseases' transmission~\cite{wesolowski2016connecting}. Location data derived from call detail records (CDRs) -- phone calls and text messages -- have been used to understand various infectious diseases' spatial transmission better, see \Cref{tab:id-mobile-phone-data}. Each of the studies got their CDRs from one MNO, which most of the time had the largest market share and coverage. The common understanding is that biases such as Multi-SIM activity and different mobile phone usage across different geographical and socio-economic groups have a limited effect on general estimates of human mobility~\cite{wesolowski2013impact}.

\begin{table}[ht]
    \caption{Studies connecting mobile phone data to diseases.}
    \label{tab:id-mobile-phone-data}
    \centering
    \begin{tabular}{lllrrr}
        \toprule
                                         & \centercell{Disease} & \centercell{Country} & \centercell{Year of dataset} & \centercell{Subscribers} & \centercell{Period}   \\
                                         &                      &                      &                              & \centercell{(millions)}  & \centercell{(months)} \\
        \midrule
        \cite{tatem2009use}              & Malaria              & Tanzania             & 2008                         & $0.8$                    & $3$                   \\
        \cite{wesolowski2012quantifying} & Malaria              & Kenya                & 2008-09                      & $14.8$                   & $12$                  \\
        \cite{isdory2015impact}          & HIV                  & Kenya                & 2008-09                      & $14.8$                   & $12$                  \\
        \cite{wesolowski2015quantifying} & Rubella              & Kenya                & 2008-09                      & $14.8$                   & $12$                  \\
        \cite{bengtsson2015using}        & Cholera              & Haiti                & 2010                         & $2.9$                    & $2$                   \\
        \cite{tatem2014integrating}      & Malaria              & Namibia              & 2010-11                      & $1.5$                    & $12$                  \\
        \cite{wesolowski2015impact}      & Dengue               & Pakistan             & 2013                         & $39.8$                   & $7$                   \\
        \cite{finger2016mobile}          & Cholera              & Senegal              & 2013                         & $0.1$                    & $12$                  \\
        \bottomrule
    \end{tabular}
\end{table}

The most common model was to assign an individual a daily location. More concretely, each subscriber was assigned to a study area on each day based on the cell tower with the most CDRs or the last outgoing CDR. Further, the primary study area ("home") was computed for each subscriber by taking the study area where the majority of days were spent.

All of the studies emphasized that preserving individuals' privacy is mandatory.
In all cases -- to the best of our understanding -- the involved MNO anonymized the CDRs before handing them over to the health authority. In addition, we found that the MNO aggregated the CDRs in at least two cases. However, none of the studies discussed privacy definitions or the potential risk of de-identification, which is exceptionally high for location data~\cite{DBLP:journals/puc/Krumm09}. Therefore, it is hard to assess if they achieved their goal of preserving individuals' privacy.

\subsection{Exposure Notification}

Many technological approaches were developed to help reduce the spread and impact of the Covid-19 pandemic~\cite{canetti2020anonymous,chan2020pact,DAR2020100307,GA20,PR20,trieu2020epione,troncoso2020decentralized}. Most of them focus on exposure notification, where the main challenges include privacy-friendliness, scalability, and utility.
These approaches crucially rely on sizable parts of the population using smartphones, enabling Bluetooth, and installing a new app. In contrast, our proposal does not help with contact tracing, but gives potentially useful epidemiological information to health authorities without requiring people to carry around smartphones or installing an app. Indeed, any mobile phone is sufficient.

In subsequent work~\cite{DBLP:journals/iacr/HolzJMPS20}, the authors propose to use a threshold PIR-SUM protocol to allow performing privacy-preserving epidemiological modeling on top of existing contact-tracing information.
Their PIR-SUM protocol is based on a multi-server private information retrieval protocol, which is not suitable for our use case where a single entity (e.g., the mobile network operator) holds all location data. While the threshold PIR-SUM protocol can in theory be built using a single-server PIR, these protocols are significantly more expensive than the multi-server PIR they use. Furthermore, their protocols require a mix-net \cite{DBLP:conf/ccs/AbrahamPY20} to provide unlinkability of their participants messages, which already requires multiple servers, and it is not immediatly obvious how to apply their ad-hoc MPC protocol to verify the validity of queries to a single-server PIR. For single-server PIR protocols based on homomorphic encryption, our input validation procedure from \Cref{sec:input_validation} might be an alternative.

\subsection{PSI-CA and PSI-SUM Protocols}

Several works attempt to improve contact tracing by enabling users to query a database, while learning nothing more then the number of intersections using PSI-CA (Private Set Intersection Cardinality)~\cite{trieu2020epione, DBLP:journals/iacr/DittmerILOEKSS20,DBLP:conf/asiacrypt/DuongPT20} protocols, or while learning nothing except the sum of the associated values of the items in the intersection using PSI-SUM~\cite{DBLP:conf/eurosp/IonKNPSS0SY20,DBLP:conf/crypto/MiaoP0SY20} protocols.

While a PSI-SUM protocol perfectly matches our use case in theory, an application of these PSI-SUM protocols in a straightforward fashion for our main scenario in \Cref{sec:real_world_param} -- where we want to calculate the sum of vectors of length $k = 2^{15}$ for a subset of $n=2^{23}$ identifiers -- would result in impractical communication cost (multiple TB).

In~\cite{DBLP:journals/iacr/LepointPRST20}, the authors propose a method to build PSI-SUM from their \emph{PIR-with-default} primitive. This approach allows one to greatly reduce the communication to be linear in the smaller set size (the size of the queried subset of the population in our scenario).
They present two approaches, where the first one has an expensive setup phase  (multiple GB transferred for our scenario) and then has very performant queries. However, our scenario's associated values are temporal location data and would change for new protocol executions, meaning the setup phase would have to be repeated each time. Their second approach does not rely on a setup phase and -- for a database size of $n= 2^{25}$ identifiers and $t=2^{12}$ queried elements -- requires 379 MB of data transfer. However, this again only calculates the sum of a single item. A naive $k = 2^{15}$-times repetition of the approach would again result in impractical communication cost.
One could investigate if the protocol can be further optimized for large associated data, since the PSI-part of the protocol does not need to be repeated.

An additional problem of the protocol in~\cite{DBLP:journals/iacr/LepointPRST20} is, that it cannot ensure that a query does not contain an item multiple times. Applied to our use case, this leads to problems in combination with differential privacy~\cite{DBLP:conf/icalp/Dwork06}, since a larger noise needs to be added for privacy, limiting utility. The protocol in~\cite{DBLP:journals/iacr/HolzJMPS20} solves the same issue by executing a separate MPC protocol to ensure that the queries are valid and do not contain duplicates.

\subsection{Generic Multi-Party Computation}
Generic Multi-Party Computation (MPC) protocols allow several parties to securely evaluate a function without having to disclose their respective inputs. Several protocols~\cite{DBLP:conf/focs/Yao86,DBLP:conf/stoc/GoldreichMW87, DBLP:conf/crypto/DamgardPSZ12,DBLP:conf/esorics/DamgardKLPSS13} and efficient implementations for generic MPC exists~\cite{scale-mamba,DBLP:conf/ccs/Keller20}, amongst many others.

We do not use generic MPC since all efficient MPC protocols exchange data linear in the size of the computed circuit and are therefore not well suited for the large databases considered in this work. Both, secret sharing and garbled circuit based MPC, would require the (secure) transmission of the server's database (either in secret-shared form or embedded in a circuit) to the client, requiring several GB of communication (e.g., $2^{23}\times 2^{15}$ matrix of $32\,$bit integers has a size of $1$\,TB). Furthermore, the most efficient secret sharing schemes, such as the popular SPDZ~\cite{DBLP:conf/esorics/DamgardKLPSS13,DBLP:conf/crypto/DamgardPSZ12}, require a so-called Beaver triple (for multiplying values) which has to be precomputed in an expensive offline phase and can not be reused. Computing enough triples for our protocol (i.e., one triple per database entry) would require $2^{38}$ triples. This triple generation alone would require $>700$ hours and more then $1000$\,TB of communication on our hardware using the MP-SPDZ library~\cite{DBLP:conf/ccs/Keller20}.

\section{Preliminaries}\label{sec:pre}
Here we will first introduce the notations and then describe homomorphic encryption (HE) and differential privacy (DP).

We write vectors in bold lower case letters and matrices in upper case letters. We use $x_i$ to access the $i$-th element of a vector $\vec{x}$. For $m\in\N$ and $x\in\Z$, let $\vec{x}^{m}$ be defined as the vector of powers of $x$: $\vec{x}^{m} = (1, x^1, ..., x^{m-1})$. We denote by $\vec{c}\circ \vec{d}$ the element-wise multiplication (Hadamard product) of the vectors $\vec{c}$ and $\vec{d}$. For a positive integer $p$, we identify $\Z_p = \Z \cap [-p/2,p/2)$. For a real number $r$, $\lfloor r\rceil$ denotes the nearest integer to $r$.

\subsection{Homomorphic Encryption}
Homomorphic encryption (HE)~\cite{DBLP:conf/stoc/Gentry09} allows to operate on encrypted data and, thus, has the potential to realize many 2-party protocols in a privacy-preserving manner. Compared to MPC, protocols using HE usually require less data communication and only one communication round, at the cost of more expensive computations.

Modern HE schemes~\cite{DBLP:conf/innovations/BrakerskiGV12,DBLP:conf/crypto/Brakerski12,DBLP:journals/iacr/FanV12,DBLP:conf/asiacrypt/CheonKKS17,DBLP:conf/asiacrypt/ChillottiGGI16} base their security on the learning with errors~\cite{DBLP:conf/stoc/Regev05} hardness assumption and its variant over polynomial rings~\cite{DBLP:conf/eurocrypt/LyubashevskyPR10}. They allow to perform both addition and multiplication on ciphertexts. During the encryption of a plaintext, random noise is introduced into the ciphertext. This noise grows with each homomorphic operation, negligible for additions but significantly for homomorphic multiplications. Once this noise becomes too large and exceeds a threshold, the ciphertext cannot be decrypted correctly anymore. We call such a scheme, that allows evaluating an arbitrary circuit over encrypted data up to a certain depth, a somewhat homomorphic encryption scheme (SHE). The specific depth depends on the choice of encryption parameters, and choosing parameters for larger depths comes, in general, with a considerable performance penalty. In this work, we use the BFV~\cite{DBLP:conf/crypto/Brakerski12,DBLP:journals/iacr/FanV12} SHE scheme to encrypt the inputs of our protocol.

Besides semantic security, two-party protocols based on HE additionally either require the notion of circuit privacy~\cite{DBLP:conf/stoc/Gentry09}, or function~\cite{DBLP:conf/crypto/GentryHV10} (evaluation~\cite{albrecht2019homomorphic}) privacy, to hide the function applied on the ciphertexts. Function privacy is often easier to achieve than circuit privacy in practice. It requires that the outputs of evaluating different circuits homomorphically on the same encrypted data need to be indistinguishable. In other words, a party decrypting the final result of a function private HE scheme can not learn anything about the circuit applied to the input data. Like many HE schemes, BFV does not naturally provide function privacy, however, it can be added by applying noise flooding~\cite{DBLP:conf/stoc/Gentry09}.

\subsection{Differential Privacy}\label{sec:dp}

Differential privacy (DP)~\cite{DBLP:conf/icalp/Dwork06} defines a robust, quantitative notion of privacy for individuals. The main idea is that the outcome of a computation should be as independent as possible (usually defined by a privacy parameter $\epsilon$) from the data of a single input. Applied to our use case, DP makes the final heatmap independent to the contribution of any individual, preventing it from leaking sensitive information.

DP is highly compatible with existing privacy frameworks and has successfully been applied to several real-world applications. Recent work~\cite{nissim2017bridging} showed that DP satisfies privacy requirements set forth by FERPA\footnote{Family Educational Rights and Privacy Act of 1974, U.S.}. Even before this analysis, several businesses were already using DP. For example, Apple~\cite{AppleDP20} and Google~\cite{GoogleDP20} have applied differential privacy to gather statistics about their users without intruding on individual's privacy. Recently, the U.S. Census 2020 uses differential privacy as a privacy protection system~\cite{USCensusDP20}.

The most prevalent technique to achieve DP is to add noise sampled from a zero-centered Laplace distribution to the outcome of the computation. The distribution is calibrated with a privacy parameter $\epsilon$ and the global sensitivity $\Delta q$ of the computation and has the following probability density function:
\[Lap(x|b) = \frac{1}{2b}e^{-\frac{|x|}{b}},\qquad \text{with } b=\frac{\Delta q}{\epsilon}\]

To add noise to a protocol operating on integers, we discretize the Laplace distribution by rounding the sampled value to the nearest integer. For a formal definition of DP, we refer to \Cref{appendix:diffpriv}.

\section{Privacy Preserving Heatmap Protocol}\label{sec:solution}
We first describe our goal and then introduce each privacy protection mechanism step by step.

\subsection{The Desired Functionality}
Our aim is to accumulate the location data of infected individuals to create a heatmap, assisting governments in managing an epidemic. Two parties controlling two different datasets are involved: A health authority who knows which individuals are infected; and an MNO who knows the location data of their subscribers. More specifically, the MNO knows how long each of their subscribers is connected to which cell towers (CDRs are a subset of this information). Based on this mobility data, our protocol could answer several questions. One in line with epidemiological literature is to look at the individuals' mobility data in the incubation period (e.g., 5-7 days for COVID-19). The final heatmap will show areas with a higher chance of getting infected with the disease. A natural extension would be to study mobility patterns after the incubation period but before confirmation/quarantine. So our protocol is generic regarding the time unit or the granularity of location data. When discussing privacy guarantees that depend on the actual data, we will explicitly outline the chosen setting.

\subsubsection*{Protocol Description.}
If the MNO knows which of its subscribers is infected, it can do the following to create the desired heatmap:
\begin{itemize}
    \item Initialize a vector $\vec{h}$ of $k$ elements with zeros, where $k$ is the total number of cell towers. Each element of this vector corresponds to one cell tower.
    \item For each infected individual, add the amount of time it spent at each cell tower to the corresponding element of the vector $\vec{h}$.
    \item Then the vector $\vec{h}$ contains the final heatmap, i.e., $h_j$ contains the accumulated time spent of all infected individuals at cell tower $j$.
\end{itemize}

Let us rewrite this process into a single matrix multiplication. First, we encode all $N$ subscribed individuals into a vector $\vec{x} \in \Z_2^N$, with $x_i \in \Z_2$ indicating, whether the individual $i$ is infected ($x_i = 1$) or not ($x_i=0$). Then we encode the location data in a matrix $Z = \left(\vec{z_1}, \vec{z_2}, \dots, \vec{z_k}\right) \in \Z^{N\times k}$ such that the vector $\vec{z_j}$ contains all the location data corresponding to the cell tower identified by $j$. In other words, the $i$-th element of the vector $\vec{z_j}$ contains the amount of time individual $i$ spent at cell tower $j$. Now, we can calculate the heatmap as $\vec{h} = \vec{x}^T \cdot Z$.

We depict the basic protocol, involving the health authority as a client and the MNO as a server, in \Cref{fig:basic_protocol}, assuming the health authority and the MNO already agreed on identifying all subscribed individuals by indices $i\in{1,...,N}$.

\begin{figure}[ht]
    \centering
    \resizebox{0.8\textwidth}{!}{%
    \fbox{
    \pseudocode[colsep=1em,addtolength=5em]{%
    \textbf{Client (Health Authority)} \< \< \textbf{Server (MNO)} \\
    \text{Input: } \vec{x}\in\Z_2^N\< \< \text{Input: }Z\in\Z^{N\times k}\\
    \text{Output: }\vec{h}=\vec{x}^T\cdot Z \< \< \text{Output: } \vec{h} \\
    \< \quad\qquad\qquad\textbf{Input}\\
    \< \sendmessageright*{\vec{x}}\\
    [][\hline]
    \< \quad\quad\textbf{Data Aggregation}\\
    \< \sendmessageleft*{\vec{h}} \< \vec{h}\gets\vec{x}^T\cdot Z\\
    \text{Output }\vec{h}
    }
    }
    }
    \caption{Basic protocol without privacy protection.}
    \label{fig:basic_protocol}
\end{figure}

\begin{remark}[Agreeing on database indices] \label{rem:indices}
    The protocol in \Cref{fig:basic_protocol} already assumes that the two parties agree on the indices of individuals in the database. In practice, the individuals can be identified using several methods, such as phone numbers, mail addresses or government ids. We now give two options to get a mapping from a phone number to a database index, while noting that any other identifier can be trivially used instead:
    \begin{itemize}
        \item The MNO and health authority engage in a protocol for Private Set Intersection (PSI) with associated data (e.g., \cite{DBLP:conf/ccs/ChenHLR18,DBLP:conf/ccs/CongMGDILR21}). In such a protocol, the health authority and the MNO input their list of phone numbers. The health authority gets as the protocol's output the phone numbers that are in both sets, as well as the associated data from the MNO. The associated data would be the index in the database in our case.
        \item The MNO sends a mapping of all phone numbers to their database index in plain. This approach is efficient and straightforward, but it discloses all subscribed individuals to the health authority. However, this is essentially a list of all valid phone numbers in random order and does not leak anything more than the validity of that number. Still, this may be an issue in some scenarios.
    \end{itemize}
    While the PSI-based solution has some overhead compared to the plain one, the performance evaluation in~\cite{DBLP:conf/ccs/CongMGDILR21} shows that a protocol execution with $2^{22}$ MNO items and $4096$ health authority items takes about $1.4$ seconds online (excluding a precomputable offline phase taking $467$ seconds) with a total communication of $8.3$ MB -- a minor increase when looking at the overall protocol.
    While PSI-SUM protocols~\cite{DBLP:conf/eurosp/IonKNPSS0SY20,DBLP:conf/crypto/MiaoP0SY20,DBLP:journals/iacr/LepointPRST20} could be used to calculate the final heatmap without revealing which identifiers are present in the MNO's set, their additional overhead is not worth the minor privacy gain, considering that for the type of identifier we are using (phone numbers), one can often already publicly check if a phone number is associated with a mobile network operator\footnote{as an example, using services such as \url{https://dexatel.com/carrier-lookup/}, or often also just calling the number}. We therefore relax our setting to allow revealing which identifiers are present in the MNO's set to take advantage of the reduced communication of our approach compared to the full PSI-SUM approaches mentioned above.
\end{remark}

Executing the protocol described in \Cref{fig:basic_protocol} would enable the MNO to learn about infected individuals, which is a massive privacy violation. On the other hand, the health authority could query a single individual's location data by sending a vector $\vec{x}=(1,0,\dots,0)$, violating privacy. In the following, we describe our techniques to protect against these violations.

\subsection{Adding Encryption}
To protect the vector send by the health authority, and therefore who is infected and who is not, we use a HE scheme $(\kgen,\enc,\allowbreak{}\dec,\eval)$. Before executing the protocol, the health authority runs $\kgen$ to obtain a secret key $\sk$ and an evaluation key $\evk$. We assume that the MNO knows $\evk$, which is required to perform operations on encrypted data, before running the protocol.

In the updated protocol, the health authority now uses $\sk$ to encrypt the input vector $\vec{x}$ and sends the resulting ciphertext vector $\vec{c}\gets \enc_{\sk}(\vec{x})$ to the MNO. The MNO then uses $\evk$ to perform the matrix multiplication on the encrypted input vector and sends the resulting ciphertext vector $\vec{h}^*\gets\eval_{\evk}\left(\vec{c}^T\cdot Z\right)$ back to health authority. The health authority can now use $\sk$ to decrypt the result and get the final heatmap $\vec{h} = \dec_{\sk}(\vec{h}^*)$.

Informally, if the used HE scheme is semantically secure, then the MNO cannot learn which individuals are infected by the disease and which are not.

\subsection{Input Validation}
\label{sec:input_validation}
In the simple protocol, the health authority could use a manipulated input vector $\vec{x}$ to include an individual multiple times (e.g., setting the corresponding vector entry to $100$ instead of $1$). Such an individual could most likely be filtered out in the final heatmap. Since the input vector is encrypted, the MNO cannot trivially check if the vector is malicious or not. Also, comparing encrypted elements is not trivially possible in most HE schemes. However, the required check can be encoded, such that it outputs $0$ if everything is correct, and a random value otherwise. We then can add this value to the final output as a masking value which randomizes the MNO's response if the input vector is malicious. We describe how to generate this masking below.

\subsubsection*{Masking Against Non-Binary Query Vector.}\label{sec:masking_binary}
Note that the HE schemes plaintext space usually is $\Z_p$, i.e., the integers modulo a prime $p$. Therefore, the inputs to our protocol -- the vector $\vec{x}$ and the matrix $Z$ -- consist of elements in $\Z_p$. As outlined above, it is crucial to the protocol's privacy that the input vector is binary, i.e., only contains $0$s and $1$s. If this is not the case, the health authority could arbitrarily modify a single person's contribution to the overall aggregated result. It is essential for DP considerations to bound the maximum possible contribution of a single individual (sensitivity).

Since the MNO only receives an encryption of the input vector, simply checking for binary values is not an option. However, we can use similar techniques to the ones used in Bulletproofs~\cite{DBLP:conf/sp/BunzBBPWM18} to provide assurance that the query vector $\vec{x}\in\Z_p^N$ contains only binary elements. First, we will exploit the following general observation. Let $\vec{d} = \vec{x} - \vec{1}$, then $\vec{x} \circ \vec{d}$ is the zero vector iff $\vec{x}$ is binary. Note that the MNO can compute an encryption of $\vec{d}$ from the encrypted input vector. The result of the Hadamard product $\vec{x} \circ \vec{d}$ can be aggregated into a single value by calculating the inner product $\langle \vec{x}, \vec{d}\rangle$, which will again be zero if $\vec{x}$ is binary. The MNO also multiplies $\vec{x}$ with powers of a random integers $y$ to reduce the probability of the health authority cheating by letting several entries of $\vec{x}$ cancel each other out during the inner product, which gives the mask:
\begin{equation}
    \mu_\mathtt{bin'} = \langle \vec{x}, (\vec{d} \circ \vec{y}^N)\rangle \,,\label{eq:mask_bin}
\end{equation}
where $\vec{y}^{N} = (1, y^1, ..., y^{N-1})$ is $y$'s vector of powers.

For the generic case of a vector $\vec{v}$ and a randomly chosen $y$, $\langle \vec{v}, \vec{y}^N \rangle = 0$ will hold for $\vec{v} \neq 0$ only with probability $N/p$~\cite{DBLP:conf/sp/BunzBBPWM18}. Using a $\nu\,$bit modulus $p$ ($p\approx2^{\nu}$), translates to a soundness error of $\nu-\text{log}_2(N)\,$bits. For details of this calculation see \Cref{appendix:sec-masks}. In particular, if we look at $N=2^{23}, \nu=60$, parameters sufficient for small nation-states (see \Cref{sec:real_world_param}), we get $37$-bit statistical security. Standard literature suggest a statistical security parameter of at least $40$-bit; therefore, we developed a method to enhance the statistical security without significant overhead.

\subsubsection*{Boosting Soundness.}
The high-level idea is that we lower the probability of cheating successfully by a random linear combination of separate masks. Intuitively, a malicious health authority would have to guess correctly for every single mask. Thus, the soundness converges to the underlying field size in the number of terms of the linear combination. For our purpose, two terms already suffice for an appropriate security level:
\begin{align}
    \mu_\mathtt{bin} & =\langle \vec{x}, (\vec{d} \circ \vec{y_1}^N)\rangle \cdot r_1 + \langle \vec{x}, (\vec{d} \circ \vec{y_2}^N) \rangle \cdot r_2 \notag
\end{align}
where $r_1, r_2 \stackrel{\$}{\gets}\Z_p \setminus\{0\}$ are two random values.
Therefore, the statistical security level increases to $\nu-1$\,bit ($=59$\,bits for $\nu = 60$). We refer to \Cref{lem:masking_binary} in \Cref{appendix:sec-masks} for a proof of this statement.

\subsubsection*{Applying the Mask.}
Once the $\mu_\mathtt{bin}$ is calculated, it gets added to the final output of the protocol. However, if the masking value is not zero, we have to make sure that a different random value is added to each element of the output vector to prevent leaking the mask if some output vector values are known beforehand. Therefore, the final mask $\vec{\mu}$ can be calculated using a random vector $\vec{r} \stackrel{\$}{\gets}(\Z_p\setminus \{0\})^k$ as follows:
\begin{equation}
    \vec{\mu} = \mu_\mathtt{bin}\cdot \vec{r} \label{eq:mask}
\end{equation}
The final mask $\vec{\mu}$ is now equal to $\vec{0}^k$ if $\vec{x}$ is a binary vector, random otherwise.

\begin{remark}[PSI-SUM with Indices]
    So far the protocol securely implements a functionality dubbed \textit{PSI-SUM with Indices}. For completeness, we included a description and its ideal functionality in \Cref{app:psisum}.
\end{remark}

\subsection{Adding Differential Privacy}\label{sec:addingdiffpriv}
The aggregated location data can still leak information about the location data of individuals. For example, the health authority could abuse the heatmap to track an individual by just querying him or by querying him alongside individuals from a completely different area. The location data of the targeted individual would be visible as an isolated zone in the resulting heatmap. Applying DP with suitable parameters will protect against such an attack since the overall goal of DP is to decrease the statistical dependence of the final result to a single database entry. In our use case, therefore, DP achieves that it is highly unlikely to distinguish between heatmaps, in which we include a single individual in the accumulation, and heatmaps, in which we do not.

Choosing proper parameters, however, highly depends on the underlying dataset. On the one hand, the chosen $\epsilon$ should be small enough to satisfy privacy concerns; on the other hand, it should be big enough not to overflow the result with noise, creating hotspots on its own. We discuss one method to choose suitable parameters in \Cref{sec:sub-privacy}.

\subsection{Final Protocol}\label{sec:protocol}
Finally, with all measures to protect privacy in place, we present the final protocol in \Cref{fig:final_protocol}.

\begin{figure*}[!htb]
    \centering
    \resizebox{\textwidth}{!}{%
    \fbox{
    \pseudocode[colsep=1em,addtolength=5em]{%
    \textbf{Client (Health Authority)} \< \< \textbf{Server (MNO)} \\
    \text{Input: } \vec{x}\in\Z_p^N\< \< \text{Input: }Z\in\Z_p^{N\times k}\\
    \text{Output: }\text{If }\vec{x}\in\Z_2^N: \vec{h}=\vec{x}^T\cdot Z + \left(\left\lfloor Lap\left(\frac{\Delta q}{\epsilon}\right)\right\rceil\right)^k\in\Z_p^{k}\< \<\text{Output: } \bot \\
    \quad\quad\quad\quad\!\text{Otherwise: }\vec{h}\stackrel{\$}{\gets}\Z^k_p\\[][\hline]
    \< \quad\qquad\textbf{Encryption}\\
    \vec{c}\gets \enc_{\sk}(\vec{x}) \< \sendmessageright*{\vec{c}}\\
    [][\hline]
    \< \quad\qquad\textbf{Data Aggregation}\\
    \< \< \vec{h}^*\gets\eval_{\evk}\left(\vec{c}^T\cdot Z\right)\\
    [][\hline]
    \< \quad\qquad\textbf{Compute Mask}\\
    \< \<\vec{d}\gets\eval_{\evk}\left(\vec{c}-\vec{1}^{N}\right)\\
    \< \< \vec{r}\stackrel{\$}{\gets}\Z_p^k;r_1,r_2,y_1,y_2\stackrel{\$}{\gets}\Z_p\\
    \< \< \mu_\mathtt{bin} \gets\eval_{\evk}\left(\langle \vec{c}, (\vec{d} \circ \vec{y_1}^N)\rangle r_1 + \langle \vec{c}, (\vec{d} \circ \vec{y_2}^N) \rangle r_2\right)\\
    \< \< \vec{\mu} \gets\eval_{\evk}\left(\mu_\mathtt{bin}\cdot \vec{r}\right)\\
    \< \< \vec{h}^* \gets\eval_{\evk}\left(\vec{h}^*+\vec{\mu}\right)\\
    [][\hline]
    \< \quad\qquad\textbf{Differential Privacy}\\
    \<  \< \vec{\delta}\stackrel{\$}{\gets} \left(\left\lfloor Lap\left(\frac{\Delta q}{\epsilon}\right)\right\rceil\right)^k\\
    \< \sendmessageleft*{\vec{h}^*} \< \vec{h}^*\gets\eval_{\evk}\left(\vec{h}^*+\vec{\delta}\right)\\
    [][\hline]
    \< \quad\qquad\textbf{Decryption}\\
    \quad \text{Output } \vec{h} \gets \dec_{\sk}(\vec{h}^*)
    }
    }
    }
    \caption{Privacy preserving heatmap protocol.}
    \label{fig:final_protocol}
\end{figure*}

\section{Security \& Privacy Analysis} \label{sec:security}
On the one side, the protocol provides input security against a malicious MNO, i.e., even if the MNO deviates from the protocol, it cannot determine the patient's identifiers (see \Cref{sec:sub-security}). On the other side, individuals' location data are protected even against a malicious health authority, i.e., the health authority cannot track individuals (see \Cref{sec:sub-privacy})).

\subsection{Security}\label{sec:sub-security}
Two-party protocols are usually proven secure with the real-ideal world paradigm~\cite{DBLP:conf/focs/Canetti01}. Roughly speaking, one has to prove that the protocol does not leak any additional information than when computed with the help of a trusted third party. The trusted third party is modeled as an ideal functionality presented in \Cref{fig:ideal}.

\begin{figure}[h]
    \centering
    \resizebox{0.95\columnwidth}{!}{
        \fbox{
            \parbox{\columnwidth}{
                \center{\textbf{\large{$\mathcal{F}_{Hmap}$}}}\\
                \vspace{3mm}
                \raggedright
                Parameters: $t,N\in\N,\beta\in\R_{+}$. Parties $P_1$ and $P_2$.
                \begin{enumerate}
                    \item Upon receiving an input $(\ifc{input},sid,P_1,P_2,\vec{x})$ from a party $P_1$, verify that $\vec{x}\in \Z^{N}_p$, else ignore input. Next, record $(sid,P_1,P_2,\vec{x})$. Once $\vec{x}$ is recorded, ignore any subsequent inputs of the form $(\ifc{input},sid,P_1,P_2,\cdot)$ from $P_1$.
                    \item Upon receiving an input $(\ifc{input},sid,P_1,P_2,Z)$ from party $P_2$, verify that $Z\in \Z_p^{N\times *}$, else ignore input. Proceed as follows: If there is a recorded value $(sid,P_1,P_2,\vec{x},w)$, compute $\vec{h}\gets \vec{x}^TZ + \left(\left\lfloor Lap\left(\beta\right)\right\rceil\right)^k$ provided that $\vec{x}\in\Z_2^N$, otherwise $\vec{h}\stackrel{\$}{\gets}\Z^k_p$, and send $(sid,P_1,P_2,k)$ where $k$ is the number of columns of $Z$ to the adversary. Then output $(\ifc{result},sid,P_1,P_2,\vec{h})$ to $P_1$, and ignore subsequent inputs of the form $(\ifc{input},sid,P_1,P_2,\cdot)$ from $P_2$.
                \end{enumerate}
            }
        }
    }
    \caption{Ideal functionality $\mathcal{F}_{Hmap}$ of the above solution.}
    \label{fig:ideal}
\end{figure}

\subsubsection*{Semi-Honest Security.}
Before we discuss malicious security, we will show that our protocol achieves semi-honest security.

\begin{lemma}\label{lem:semi}
    Let us assume $\he$ is an IND-CPA secure homomorphic encryption scheme that provides function privacy. Then the protocol in \Cref{fig:final_protocol} securely realizes $\mathcal{F}_{Hmap}$ against static semi-honest adversaries.
\end{lemma}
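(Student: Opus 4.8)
The plan is to prove security in the standard real/ideal simulation paradigm: for each party the static semi-honest adversary may corrupt, I would exhibit a PPT simulator $\simu$ that, given only that party's input and prescribed output, produces a view computationally indistinguishable from the party's view in a real execution of \Cref{fig:final_protocol}. Since the protocol has two parties, there are two cases, a corrupted server $P_2$ and a corrupted client $P_1$; the all-honest case follows immediately from correctness. Throughout I would use that, by construction, a valid query ($\vec{x}$ binary of the announced Hamming weight $w$) forces the mask $\vec{\mu}$ to vanish while an invalid query makes it uniform, as established by the soundness of the masks (\Cref{lem:masking_binary}), so that the plaintext the client recovers is exactly the value $\mathcal{F}_{CoV}$ outputs.

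\textbf{Corrupted server.} Here $\simu$ receives $Z$ and the leakage $(sid,P_1,P_2,k)$, together with $w$, which is public information (the number of infected individuals). The server's view consists of $\evk$, the incoming pair $(\vec{c},w)$, and its own randomness; its output is $\bot$. I would have $\simu$ run $\he.\kgen$ itself, forward $w$, and replace the honest ciphertext by $\vec{c}\gets\he.\enc_{\sk}(\vec{0})$. Indistinguishability of the simulated from the real view then reduces to IND-CPA security of $\he$ via a standard hybrid argument that switches the $N$ encrypted entries of $\vec{x}$ to zero one at a time; any distinguisher yields an IND-CPA distinguisher. Because the server produces no output and performs all evaluations itself, nothing further must be matched.

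\textbf{Corrupted client.} Here $\simu$ receives $\vec{x}$ and the output $\vec{h}$. The client's view is its randomness, the self-generated keys and ciphertext $\vec{c}\gets\he.\enc_{\sk}(\vec{x})$, the value $w=\langle\vec{x},1^N\rangle$, and the single incoming message $\vec{h}^*$. All but $\vec{h}^*$ are reproduced exactly by running the honest client code on $\vec{x}$. For $\vec{h}^*$ I would have $\simu$ output a fresh encryption $\he.\enc_{\sk}(\vec{h})$. Correctness of the plaintext is immediate: in a real run the client decrypts $\vec{x}^T Z+\vec{\mu}+\vec{\delta}$, which equals $\vec{x}^T Z+\vec{\delta}$ for a valid query and is uniform for an invalid one, matching exactly the distribution of $\vec{h}$ produced by $\mathcal{F}_{CoV}$, the Laplace noise $\vec{\delta}$ being drawn from the same distribution in both worlds.

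The main obstacle is the corrupted-client case, and specifically arguing that the returned ciphertext $\vec{h}^*$ reveals nothing about $Z$ beyond its decryption $\vec{h}$. Replacing the evaluated ciphertext by a fresh encryption of $\vec{h}$ is only sound if the evaluation noise of $\he$ does not encode information about the circuit or about $Z$; this is a circuit-/function-privacy property that IND-CPA alone does not furnish. I would close this gap by invoking the (statistical) circuit privacy of the underlying BFV instantiation, obtained through noise flooding before the result ciphertext is returned, so that $\he.\eval_{\evk}(\vec{c}^T\cdot Z+\vec{\mu}+\vec{\delta})$ is statistically close to a fresh encryption of its plaintext. Granting this, the two simulator outputs are indistinguishable and the lemma follows; the corrupted-server reduction to IND-CPA is comparatively routine, the only care being to keep the $w$-leakage consistent with $\mathcal{F}_{CoV}$.
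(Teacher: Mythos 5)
Your proposal follows essentially the same route as the paper's proof in \Cref{appendix:security}: a simulation argument in which the corrupted server's view is handled by replacing $\he.\enc_{\sk}(\vec{x})$ with an encryption of $\vec{0}^{N}$ and reducing any distinguisher to an IND-CPA adversary, while the corrupted client's single incoming message is simulated by a fresh encryption of the output, with the mask-soundness result (\Cref{lem:masking_binary}) guaranteeing that the decrypted plaintext matches the distribution produced by $\mathcal{F}_{CoV}$ except with probability $2^{-\mathcal{O}(\nu)}$ (which the paper makes precise by proving $(\kappa,\nu)$-indistinguishability rather than plain computational indistinguishability). The one place you go beyond the paper is your explicit appeal to circuit privacy via noise flooding to justify swapping the evaluated ciphertext $\vec{h}^*$ for a fresh encryption of its plaintext; the paper's simulator $\simu_{CoV}$ performs exactly this swap in the $P_1$-corrupted case without comment, so your caveat flags a step the paper leaves implicit rather than a genuine divergence in approach.
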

The high-level idea is two-fold. Firstly, by the definition of semantic security, the MNO can not learn anything from encrypted data, hence, we reduce our protocol's security against the MNO's corruption to the semantic security of the underlying HE scheme. Second, function privacy guarantees that the health authority learns nothing more about the MNO's matrix, than what can be derived from the input $\vec{x}$ and the output $\vec{h}$. The formal proof can be found in \Cref{appendix:security}.

\subsubsection*{Malicious Security.}
Achieving simulation-based security against a malicious MNO would be similar to verified HE. While some theoretical constructions exist~\cite{DBLP:conf/esorics/LaiDPW14}, they are not practical.

Instead, we show input security against a malicious MNO, also known as one-sided simulation security. This notion has been first considered in the context of oblivious transfer~\cite{DBLP:conf/soda/NaorP01}, was then formalized~\cite{DBLP:conf/tcc/HazayL08}, and recently used~\cite{DBLP:conf/ccs/ChenHLR18} in the realm of PSI. In our protocol, one-sided simulation guarantees that the patients' identifiers are protected even in the presence of a malicious MNO (one that deviates from the protocol). For a formal definition, see \Cref{appendix:security}.

\begin{theorem}\label{thm:one-sided}
    Let us assume $\he$ is an IND-CPA secure homomorphic encryption scheme that provides function privacy. Then the protocol in \Cref{fig:final_protocol} securely realizes $\mathcal{F}_{Hmap}$ with one-sided simulation in the presence of a maliciously controlled MNO.
\end{theorem}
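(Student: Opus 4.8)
The plan is to establish the two requirements of one-sided simulation separately: full ideal/real simulation when the client $P_1$ is corrupted, and input privacy (indistinguishability of views) when the server $P_2$ is maliciously corrupted. The first requirement I would inherit essentially verbatim from the semi-honest analysis: since $P_1$ is the receiver and holds the key material, the simulator for a corrupted client constructed in the proof of \Cref{lem:semi} already produces a view --- the generated keys $(\sk,\evk)$, the ciphertext $\vec{c}$, the weight $w$, and a returned ciphertext decrypting to the value delivered by $\mathcal{F}_{CoV}$ --- that is indistinguishable from the real one when the honest server occupies the second slot. Hence I would simply invoke \Cref{lem:semi} for this side and concentrate the real work on the malicious server, which is also where the theorem's content lies.

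For the malicious server I would argue privacy by a direct reduction to IND-CPA security of $\he$. The key observation is that a corrupted $P_2$ receives exactly one message from the client, namely $(\vec{c},w)$, and in addition holds the public evaluation key $\evk$; everything the adversary subsequently produces (the aggregation $\vec{h}^*$, the masks $\mu_\mathtt{bin},\mu_\mathtt{HW}$, the combined mask $\vec{\mu}$, the noise $\vec{\delta}$, and the reply) is an efficiently computable function of $\evk$, $(\vec{c},w)$, its own input $Z$, and its own randomness. Consequently the entire view of the adversary is determined by $(\evk,\vec{c},w)$, and since $\evk$ is sampled by $\kgen$ independently of $\vec{x}$, the only input-dependent parts are $\vec{c}$ and $w$.

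I would then fix two client inputs $\vec{x}_0,\vec{x}_1\in\Z_t^{N}$ with the same announced weight $\langle\vec{x}_0,\vec{1}^{N}\rangle=\langle\vec{x}_1,\vec{1}^{N}\rangle=w$, so that $w$ coincides in both experiments; it then remains to show that $\he.\enc_{\sk}(\vec{x}_0)$ and $\he.\enc_{\sk}(\vec{x}_1)$ induce indistinguishable adversarial views. Because $\vec{c}$ is a vector of $N$ componentwise ciphertexts, I would use a standard hybrid argument over the $N$ coordinates: the $j$-th hybrid encrypts the first $j$ coordinates according to $\vec{x}_1$ and the remaining ones according to $\vec{x}_0$, so that consecutive hybrids differ in a single encrypted plaintext. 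Any distinguisher between two neighbouring hybrids, composed with the efficient residual computation the server performs after receiving $(\vec{c},w)$, yields an IND-CPA adversary that is handed $\evk$ as part of the public key and submits the two differing coordinates as its challenge plaintexts; a non-negligible distinguishing advantage thus contradicts IND-CPA security.

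The step I expect to be the main obstacle is not the hybrid itself but reconciling the malicious server's view with what $\mathcal{F}_{CoV}$ reveals to the adversary. The functionality only releases the column count $k$, whereas the real server additionally sees the weight $w$; I would therefore phrase the privacy guarantee relative to inputs of equal hamming weight and justify that $w$ is public information (the number of infected individuals in the query, as already flagged in the remark following \eqref{eq:mask}), so that disclosing it is harmless and does not separate $\vec{x}_0$ from $\vec{x}_1$. Once this accounting is pinned down --- and once one checks that the mask and noise terms the server computes depend on $\vec{x}$ only through $\vec{c}$ and hence are already covered by the reduction --- the two requirements together yield one-sided simulation security against a maliciously controlled server.
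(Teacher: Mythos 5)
Your proposal takes essentially the same route as the paper: the paper likewise disposes of every case except the corrupted server by invoking \Cref{lem:semi}, and then argues client-input privacy directly from the semantic security of $\he$, since the malicious server's view consists only of $\evk$, the ciphertext $\vec{c}$, and $w$, from which everything else it computes is derived. Your extra care --- the explicit coordinate-wise hybrid and the observation that $w$ is sent in the clear, so view-indistinguishability can only be claimed for client inputs of equal announced hamming weight --- fills in details the paper leaves implicit (it acknowledges the $w$ leakage only in a remark), and is a tightening rather than a different approach.
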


\begin{proof}
    From \Cref{lem:semi}, we already know that the protocol is secure against semi-honest adversaries. The only thing left to show is input privacy of the health authority against a malicious MNO, i.e., the MNO is not able to learn any information from the health authority's input (patients' identifier). Now, due to the fact that the MNO's view only includes an encryption of the health authority's input, by the semantic security of $\he$, we have that the MNO learns nothing about the health authority's input.
\end{proof}

\subsection{Privacy}\label{sec:sub-privacy}

The protocol's output exposes aggregate information, namely the amount of time spent by individuals at a cell tower, to the health authority. In the worst case, only one individual is present in the aggregation. Even in this case the health authority should not be able to single out any individual. To mitigate this threat, we propose to use \emph{differential privacy} (DP).

\subsubsection{Privacy-Utility Tradeoff.}
It is a challenge to choose the right amount of noise to protect individuals' privacy while still preserving utility. Ultimately, this tradeoff is not only technical but also has to take into account normative considerations~\cite{nissim2017bridging}. Here, we only explore the technical tradeoff and leave the latter one to policymakers.

There has been limited research specifically addressing the technical tradeoff~\cite{lee,kohli,DBLP:conf/csfw/HsuGHKNPR14}. However, the methods of~\cite{lee,kohli} are not applicable to our protocol since they require either input from the individuals~\cite{kohli} or "\textit{knowing the queries to be computed}"~\cite{lee}. Therefore, we choose to follow~\cite{DBLP:conf/csfw/HsuGHKNPR14}'s rigorous method to find real-world parameters for DP.

\subsubsection{Choosing the Right \texorpdfstring{$\epsilon$}{epsilon}.} \label{sec:epsilon}
The model in~\cite{DBLP:conf/csfw/HsuGHKNPR14} provides a principled approach to choose $\epsilon$. It can be split into two major steps resulting in two constraints that have to be satisfied simultaneously. First, one chooses the desired utility by defining a confidence interval. The parameters of the confidence interval give the first constraint on the required minimum number $w$ of infected individuals and $\epsilon$. Note that if the health authority does not provide the number of infected individuals $w$ or lies about it, privacy is not affected. Only utility cannot be guaranteed any more.

Further, the method requires setting a bound on the expected privacy harm per individual and estimating the expected cost (baseline cost) of not being part of the outcome (e.g., database breach). This leads to the second constraint on $\epsilon$. Every pair of parameters, $\epsilon$, and the number of infected individuals $w$ that simultaneously fulfill both constraints, is a reasonable privacy-utility tradeoff choice.

To illustrate this method, we now provide a possible set of values for this example. Choosing these values requires a few assumptions. We want to highlight that our assumptions are, at best, educated guesses. The real-world values have to be adjusted to the concrete circumstances and be discussed by a group of privacy, ethical, legal, epidemiological and policy experts.

First, the time unit is days for consistency with previous epidemiological studies, see \Cref{sec:impact_human_mobility}. We decided to aim for a margin of error of $\pm 5\%$ with a probability of $0.95$ (confidence). In terms of privacy, the method requires us to estimate the expected base costs (harm) that arise for an individual by using the MNO's services (data breach at the MNO would leak the location data), i.e., without even being part of the computation. We assume that without performing our protocol, this probability is less than $0.00001$. In the case of a leakage, we set the monetary harm inflicted to an individual to an exemplary amount of $\$1000$ per day. This seems reasonable since most smartphone users divulge exact location data for far less than that amount to companies. Now, we can calculate the expected baseline cost as $0.00001\cdot \$1000=\$0.01$ per day of leakage. We think performing the protocol is justified if the cost of participating does not exceed $\$0.02$. We arrive at the following two constraints (see \Cref{appendix:dp_ecnomic} for details)

\begin{align*}
    \exp\left(-0.05\cdot\frac{w\epsilon}{2}\right) & \leq 0.05\;\;\; & \text{ (utility)} \\
    0.01\cdot(e^{\epsilon}-1)                      & \leq 0.02,      & \text{ (privacy)}
\end{align*}
which are illustrated in \Cref{fig:diff_priv_epsilon}.

\begin{figure}[!htb]
    \centering
    \includegraphics[width=0.75\columnwidth]{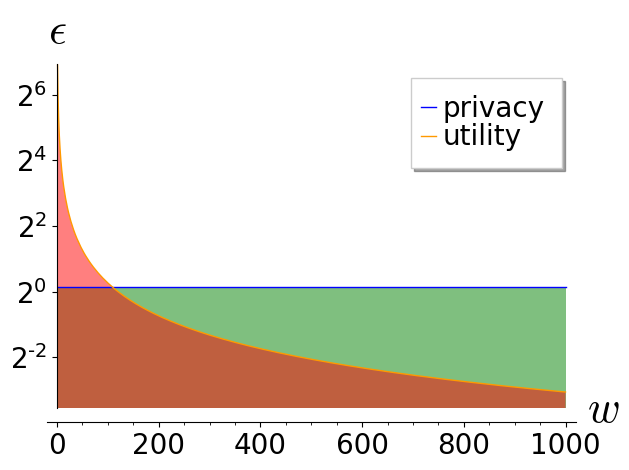}
    \caption{Privacy-utility tradeoff: The green area are possible combination for $\epsilon$ and $w$ ($\#$ infected individuals). Above privacy cannot be guaranteed; below utility is not satisfied.}
    \label{fig:diff_priv_epsilon}
\end{figure}

If the health authority wants to release a heatmap to inform the public about hotspots or justify their policies, it must add additional noise to the map. Otherwise, the MNO could subtract the noise, which itself added in the first place, thus removing the protection provided by DP. The addition of noise by both parties does not violate privacy because DP enjoys composability~\cite{DBLP:journals/fttcs/DworkR14}. More concretely, if the heatmap produced by the MNO is $\epsilon_1$-differentially private and the health authority adds noise corresponding to $\epsilon_2$ to it, then the final heatmap is $(\epsilon_1+\epsilon_2)$-differentially private. The same methodology as above should be applied to choose $\epsilon_2$. It is crucial to find parameters such that the points $(w,\epsilon_1)$, $(w,\epsilon_2)$, and $(w,\epsilon_1+\epsilon_2)$ are in the plot's green area.

To illustrate the trade-off of \Cref{fig:diff_priv_epsilon} for a practical example, we performed experiments on the London subset of the publically available \textit{gowalla} dataset~\cite{gowalla}. This dataset consists of thousands of check-in's where each check-in consists of a user-id, GPS coodinates and a timestamp. To stay consistent with the methodology we discuss in the rest of the paper, we mapped all check-in locations to the locations of the nearest cell towers in London and treat multiple check-ins from the same user to the same cell tower as just one check-in. The final dataset consists of 4571 people and 9994 cell towers. \Cref{fig:heatmap} depicts a snipped of the original heatmap and the heatmaps resulting by applying DP, having $w=600$ randomly chosen infected people and varying $\epsilon$. The generated figures visually confirm our expectations based on the calculations above: One can observe that the heatmap without DP (\Cref{fig:heatmap_nodp}) is very similar to the heatmap with too little noise (\Cref{fig:heatmap_no_priv}), indicating that the noise is not enough to guarantee privacy. On the other hand, the heatmap with too much noise (\Cref{fig:heatmap_no_util}) clearly provides no utility due to the noise creating too many hotspots. In the correctly parameterized heatmap (\Cref{fig:heatmap_dp}), one can observe some difference to \Cref{fig:heatmap_nodp} due to noise, however the biggest hotspots remain the same. In other words, privacy and utility are preserved. Government officials now can use \Cref{fig:heatmap_dp} to set new policies (e.g., closing public locations in the hotspot areas) without the possibility to track the location of individuals.

\begin{figure*}[!htb]
    \begin{subfigure}[c]{0.5\textwidth}
        \centering
        \includegraphics[width=0.9\textwidth]{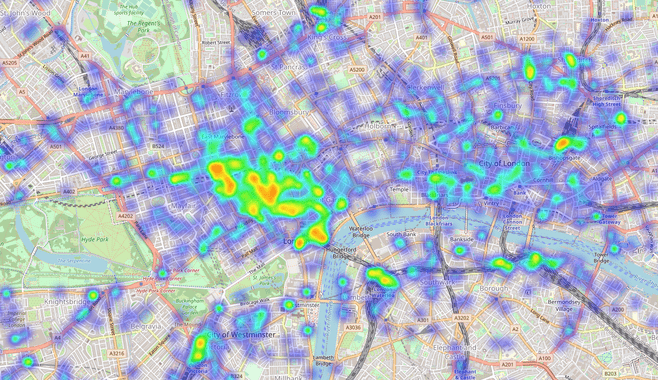}
        \subcaption{Original data without noise.}\label{fig:heatmap_nodp}

    \end{subfigure}%
    \begin{subfigure}[c]{0.5\textwidth}
        \centering
        \includegraphics[width=0.9\textwidth]{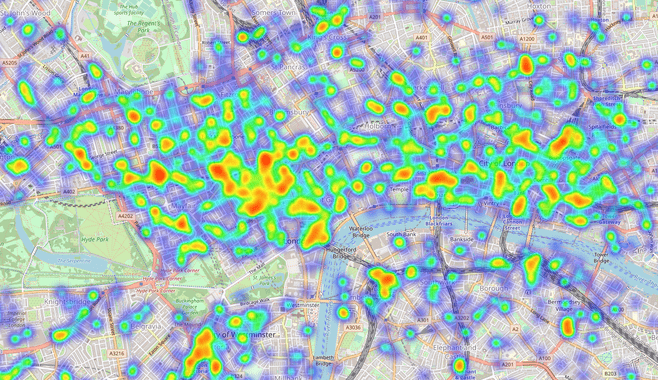}
        \subcaption{$\epsilon=0.05$: utility is not satisfied (red area in Fig. \ref{fig:diff_priv_epsilon}).}\label{fig:heatmap_no_util}

    \end{subfigure}\\[1em]
    \begin{subfigure}[c]{0.5\textwidth}
        \centering
        \includegraphics[width=0.9\textwidth]{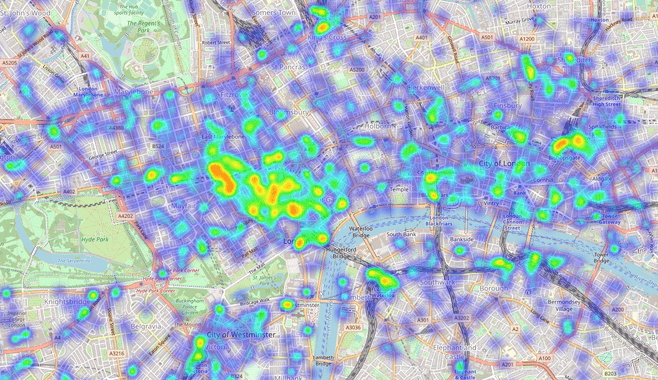}
        \subcaption{$\epsilon=0.6$: utility is satisfied and privacy is \\guaranteed (green area in Fig. \ref{fig:diff_priv_epsilon}).}\label{fig:heatmap_dp}

    \end{subfigure}%
    \begin{subfigure}[c]{0.5\textwidth}
        \centering
        \includegraphics[width=0.9\textwidth]{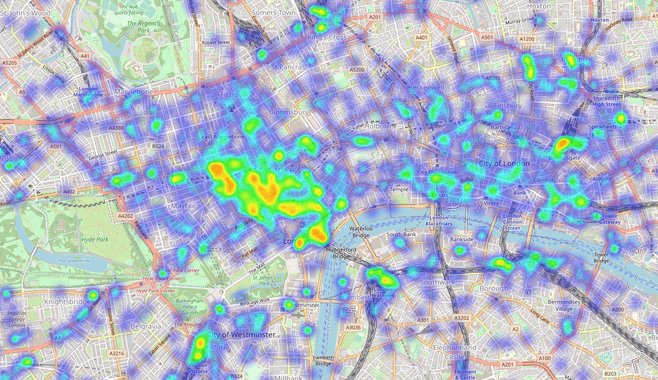}
        \subcaption{$\epsilon=3$: privacy is not guaranteed \\(above the privacy line in Fig. \ref{fig:diff_priv_epsilon}).}\label{fig:heatmap_no_priv}
    \end{subfigure}
    \caption{Influence of different $\epsilon$ values on an artificial heatmap created by mapping the \textit{gowalla}~\cite{gowalla} dataset onto Londoner cell towers. \Cref{fig:heatmap_nodp} shows the unmodified heatmap providing no privacy. While \Cref{fig:heatmap_no_priv} has too little noise for privacy (practically no difference compared to \Cref{fig:heatmap_nodp}), \Cref{fig:heatmap_no_util} has too much noise for utility. \Cref{fig:heatmap_dp} provides both privacy and utility. While the noise clearly influences the image, the hotspots remain the same.}
    \label{fig:heatmap}
\end{figure*}

\begin{remark}
    Several queries could contain the same individual. Since the overall movement pattern for the same individual changes slowly over time, we model this as an identical database. Therefore the total number of queries has to be limited to the total privacy budget. For example, if we follow the values of the analysis above and the health authority queries once a week for two months (= $8$ queries), the privacy budget suffices to provide utility as long as the number of infected individuals $w$ is above $750$ per week.
\end{remark}

\subsection{Summary and Limitations}
To summarize, the patients identifiers are encrypted during the whole protocol, hence, the semantic security of the HE scheme protects the privacy of the patients even if the MNO is cheating. The functional privacy of the HE scheme prevents, that the MNO's computation leaks anything about any location data to the health authority.
The binary mask guarantees that each individual is only present at most once in the query and prevents that a cheating health authority can amplify the contribution of individual's location data in the final heatmap. Differential privacy then prevents that location data from individual's can be singled out from the resulting heatmap. Consequently, the location data of individuals is protected even if the health authority is cheating. Hence, all sensitive information is always kept private from other parties during the whole protocol.

Even though privacy of input data is guaranteed, the protocol has some practical limitations. The protocol cannot guarantee, that either the health authority, or the MNO use truthful data in the first place. In other words, malicious health authorities can randomly mark individuals as infected and MNO's can use fake location data to create the heatmap. The protocol then would guarantee privacy of these wrong inputs, but the produced heatmap would be useless. This dependence on the truthfullness of the input data is, unfortunately, a generic problem to \textit{any} computation (plain and privacy preserving) and can not be prevented by cryptographic measures. We, therefore, propose that independent officials perform a yearly audit, e.g., at the end of the year, of the involved data to expose cheating parties.

Another limitation of our protocol is, that the utility of the heatmap scales with the prevalence of the disease. Concretely, the more people are infected, the smaller the impact of differential privacy on the final outcome. Conversely, the less people are infected the larger the impact of the noise and the utility drops. Thus, for very small prevalences it might not be possible to achieve high utility while maintainig privacy with our protocol.

\section{Implementation \& Performance}\label{sec:impl}
The data aggregation of our protocol requires only homomorphic ciphertext-ciphertext addition and homomorphic plaintext-ciphertext multiplication; however, the evaluation of the binary mask additionally requires homomorphic\\ciphertext-ciphertext multiplication. For our implementation we chose to use the BFV~\cite{DBLP:conf/crypto/Brakerski12,DBLP:journals/iacr/FanV12} SHE scheme, which fulfills these requirements. More specifically we use its implementation in the SEAL v3.6~\cite{sealcrypto} library, a fast, actively developed open-source library maintained by Microsoft Research.

The computationally most expensive phase in the protocol is the Data Aggregation phase, in which the MNO multiplies a huge matrix to a homomorphically encrypted input vector. Therefore, the main objective of our implementation is to perform this huge matrix multiplication as efficiently as possible.

\subsection{Packing} \label{sec:packing}
Modern HE schemes (including BFV) allow packing a vector of $n$ plaintexts into only one ciphertext. Performing an operation on this ciphertext then is implicitly applied to each slot of the encrypted vector, similar to single-instruction-multiple-data (SIMD) instructions on modern CPUs (e.g., AVX). However, the ciphertext size does not depend on the exact number ($\leq n$) of encoded plaintexts. The HE schemes support various SIMD operations, including slot-wise addition, subtraction and multiplication, and slot-rotation. However, one can not directly access a specific slot of the encoded vector. We can use the SIMD encoding to speed up the matrix multiplication of our protocol significantly.

In the BFV scheme (and its implementation SEAL), the number of available SIMD slots equals the degree of the cyclotomic reduction polynomial $(x^n+1)$; thus, it is always a power of two.
In the ciphertexts, the $n$ slots are arranged as matrix of dimensions $(2 \times \sfrac{n}{2})$. A ciphertext rotation affects either all rows or all columns of the matrix simultaneously. Therefore, we can think of the inner matrix as two rotatable vectors, which can be swapped.

\subsection{Homomorphic Matrix Multiplication}
Since SEAL does not provide algorithms for plain-matrix times encrypted vector multiplication, we implement the baby-step giant-step (BSGS) optimized matrix-vector multiplication~\cite{DBLP:conf/crypto/HaleviS14,DBLP:conf/eurocrypt/HaleviS15,DBLP:conf/crypto/HaleviS18} on our own and optimize it to fully leverage all slots (i.e., both rotatable vectors) of the homomorphic ciphertexts.

\subsubsection*{BSGS Matrix Multiplication.}
The SIMD encoding can be used to efficiently speed up matrix multiplication by using the diagonal method introduced by Halevi and Shoup~\cite{DBLP:conf/crypto/HaleviS14}, and its optimized version based on the BSGS algorithm~\cite{DBLP:conf/eurocrypt/HaleviS15,DBLP:conf/crypto/HaleviS18}:
\begin{align}
     & Z\cdot \vec{x}  = \sum_{i=0}^{m-1}\diag(Z,i)\circ \rot(\vec{x},i)          \label{eq:bsgs}                  \\
     & = \sum_{k=0}^{m_2-1} \rot\left(\sum_{j=0}^{m_1-1}\diag'(Z,km_1 + j)\circ \rot(\vec{x},j),km_1\right) \notag
\end{align}
where $m=m_1\cdot m_2$ and $\diag'(Z,i) = \rot\left(\diag(Z,i), -\floor{i/m_1}\cdot m_1\right)$.\footnote{In \Cref{eq:bsgs}, $\floor{i/m_1}$ is equal to $k$.} Note, that $\rot(\vec{x}, j)$ only has to be computed once for each $j<m_1$, therefore, \Cref{eq:bsgs} only requires $m_1 + m_2 - 2$ rotations of the vector $\vec{x}$ in total.

\subsubsection*{Extension to Bigger Dimensions.}
In our protocol, we want to homomorphically evaluate $\vec{x}^T \cdot Z = (Z^T\cdot\vec{x})^T$, where $\vec{x}\in\{0,1\}^N$ and $Z\in\Z_p^{N \times k}$, for big parameters $N$ and $k$. As described in \Cref{sec:packing}, the inner structure of the BFV ciphertext consists of two vectors of size $\sfrac{n}{2}$ each, and it does not allow a cyclic rotation over the whole input vector of size $n$. However, a rotation over the whole input vector is required by the BSGS algorithm. Therefore, we only can perform a BSGS multiplication with a $(\sfrac{n}{2} \times \sfrac{n}{2})$ matrix using this packing. Fortunately, we can use the remaining $\sfrac{n}{2}$ slots (i.e., the second vector in the inner structure of the BFV ciphertext) to simultaneously perform a second $(\sfrac{n}{2} \times \sfrac{n}{2})$ matrix multiplication. Therefore, after a homomorphic BSGS matrix multiplication, the result is a ciphertext $c$, where each of the two inner vectors encodes the result of a $(1\times \sfrac{n}{2}) \times (\sfrac{n}{2} \times \sfrac{n}{2})$ vector-matrix multiplication. The sum of those two vectors can easily be obtained by rotating the columns of the ciphertext $c$ and adding it to the first result:
\begin{equation}
    c_{sum} = c + \rot_\texttt{col}(c) \label{eq:finalsum}
\end{equation}
Thus, we can use one $(\sfrac{n}{2} \times \sfrac{n}{2})$ BSGS matrix multiplication and \Cref{eq:finalsum} to implement a homomorphic $(1\times n) \times (n \times \sfrac{n}{2}) = (1\times \sfrac{n}{2})$ vector-matrix multiplication.

Taking this into account, we split the huge $(N \times k)$ matrix into $n_v\cdot n_o$ submatrices of size $(n \times \sfrac{n}{2})$, with $n_v = \ceil{\frac{N}{n}}$ and $n_o = \ceil{\frac{2k}{n}}$, padding the submatrices with zeros if necessary. We split the input vector $\vec{x}$ into $n_v$ vectors of size $n$ (padding the last vector with zeros if necessary) and encrypt each of these vectors to get $n_v$ ciphertexts $c_i$. The final result of the $\vec{x}^T\cdot Z$ matrix multiplication can be computed with the following equation:
\begin{equation}
    \tilde{c}_i = \sum_{j=0}^{n_v - 1}\texttt{MatMul}(\texttt{SubMat}(Z, j, i)^T, c_j)\,\,\,\forall 0 \leq i < n_o \label{eq:splitmul}
\end{equation}
where, \texttt{SubMat}$(Z, j, i)$ returns the submatrix of $Z$ with size $(n \times \sfrac{n}{2})$, starting at row $n \cdot j$ and column $\frac{n}{2} \cdot i$, and \texttt{MatMul}$(Z, c)$ performs the homomorphic BSGS matrix multiplication $Z\cdot c$ followed by \Cref{eq:finalsum}.

\Cref{eq:splitmul} produces $n_o$ ciphertexts $\tilde{c}_i$, with the final results being located in the first $\sfrac{n}{2}$ slots of the ciphertexts. Overall, our algorithm to homomorphically calculate $\vec{x}^T \cdot Z$ requires $n_v\cdot n_o$ BSGS matrix multiplications and the total multiplicative depth is $1$ plaintext-ciphertext multiplication.

\subsection{Homomorphic Evaluation of the Mask}
To calculate the binary vector masking value (\Cref{eq:mask_bin}), we need to calculate the inner product of two homomorphically encrypted ciphertexts $c$ and $d$. After an initial multiplication $c\cdot d$, the inner product requires log$_2(\sfrac{n}{2})$ rotations and additions, followed by \Cref{eq:finalsum} to produce a ciphertext, where the result is encoded in each of the $n$ slots.
Our implementation uses rejection sampling and the \shake algorithm to cryptographically secure sample all the required random values in $\Z_p$. The total multiplicative depth to homomorphically evaluate the final mask (\Cref{eq:mask}) is $1$ ciphertext-ciphertext multiplication and $2$ plaintext-ciphertext multiplications.

\subsection{BFV parameters}
In BFV, one can choose three different parameters which greatly impact the runtime, security, and the available noise budget (i.e. how much further noise can be introduced until decryption will fail). These paramaters are the degree of the reduction polynomial $n=2^k$, the plaintext modulus $p$, which needs to be prime and $p \equiv$ $1$ (mod $2\cdot n$) to enable packing, and the ciphertext modulus $q$. We test our implementation for a computational security level of $\kappa=128\,$bit for different plaintext moduli $p$ using the smallest $n$ (and its default value for $q$) providing enough noise budget for correct evaluation of our protocol.

\subsection{Function Privacy and Noise Flooding}

Function privacy can be achieved by re-randomization and noise flooding, where the MNO adds an encryption of zero with a sufficiently large noise~\cite{DBLP:conf/stoc/Gentry09,DBLP:conf/eurocrypt/DucasS16} to the protocol's output. Following the smudging lemma~\cite{DBLP:conf/eurocrypt/AsharovJLTVW12}, one needs to add a ciphertext with noise being $\lambda_{\text{FP}} + \text{log}_2(n) + \text{log}_2(n_o)$ bits larger than the upper bound of our protocol's original output's noise to achieve a statistical distance of $2^{\lambda_{\text{FP}}}$ between different executions.

We implement noise flooding by creating an encryption of zero ($c_0$) with large noise (in practice, we set the noise as large as possible while ensuring decryption is still possible). Adding $c_0$ to the output of our protocol ($c$) results in a ciphertext which has $\lambda_{\text{FP}} = \textsc{noisebudget}(c) - \textsc{noisebudget}(c_0) - \text{log}_2(n) - \text{log}_2(n_o)$ statistical function privacy. In our concrete parameter sets, we ensure that $\lambda_{\text{FP}} > \nu$.

However, like most efficient instantiations of function privacy, noise flooding provides security against semi-honest adversaries only (see~\cite{DBLP:conf/eurocrypt/DucasS16} and contained references), and so our implementation also only provides semi-honest security.
Still, once available, our implementation can use efficient maliciously function-private FHE schemes instead and benefit from security against a malicious health authority.

\subsection{Benchmarks}

We benchmark our prototype implementation\footnote{The source code is available at \url{https://github.com/IAIK/CoronaHeatMap}.} on an c5.24xlarge AWS EC2 instance (96 vCPU @ 3.6\,GHz, 192\,GiB RAM) running Ubuntu Server 20.04 in the Region Frankfurt with a current price of $\$4.656$ per hour.

In our benchmarks, we focus on evaluating the runtime of the Data Aggregation phase of our protocol. Since in our use cases $N$ is much bigger than $k$, we implemented multithreading, such that the threads split the number of rows in the matrix (more specifically, the number of submatrices in the rows $n_v$) equally amongst all available threads. Therefore, each thread has to perform at most $\ceil{\frac{n_v}{\text{\#threads}}}\cdot n_o$ \texttt{MatMul} evaluations, which will be combined at the end by summing up the intermediate results.

The evaluation of the proving mask with its higher multiplicative depth requires BFV parameters providing a bigger noise budget, however, its actual evaluation does not impact the overall runtime of the protocol since we perform it in an extra thread in parallel to the data aggregation. Furthermore, adding DP, noise flooding, as well as the computations of the health authority (encryption and decryption), have negligible runtime.

The runtime of our protocol is $\mathcal{O}(n_v n_o)$, i.e., it scales linearly in the number of \texttt{MatMul} evaluations. This can be seen in \Cref{fig:runtime_dim} in which we summarize the runtime of the homomorphic matrix multiplication for different matrix dimensions using only one thread. For real-world matrix dimensions, some added runtime has to be expected due to thread synchronization and the accumulation of the intermediate thread results.

\begin{figure}[ht]
    \centering
    \begin{tikzpicture}
    \begin{axis}[%
        scatter/classes={%
                a={mark=pentagon*,red},%
                b={mark=square*,red},%
                c={mark=triangle*,green},%
                d={mark=square*,green},%
                e={mark=triangle*,blue},%
                f={mark=square*,blue},%
                g={mark=triangle*,red},%
                h={mark=pentagon*,blue},%
                i={mark=pentagon*,green}%
            },%
            ylabel={Runtime (sec)},
            xlabel={\#\texttt{MatMul}},
            legend style={legend pos=outer north east, draw=none,%
                cells={anchor=west, font=\scriptsize}},%
            ]
        ]

        \addlegendimage{empty legend}
        \addplot[scatter, scatter src=explicit symbolic]%
        table[meta=label] {
            x   y      label
            1      89.2 a
            5     443.1 b
            5     443.6 c
            10    886.1 d
            10    886.4 e
            20   1771.6 f
            20   1773.8 g
            25   2214.4 h
            50   4412.6 i
        };
        \legend{%
        \textbf{\hspace*{-0.6cm}Matrix Dimension $N\times k$},%
        $n\times \frac{n}{2} = 16384\times 8192$,%
        $n\times 5\frac{n}{2} = 16384\times 40960$,%
        $5n\times \frac{n}{2} = 81920\times 8192$,%
        $n\times 10\frac{n}{2} = 16384\times 81920$,%
        $10n\times \frac{n}{2} = 163840\times 8192$,%
        $20n\times \frac{n}{2} = 327680\times 8192$,%
        $n\times 20\frac{n}{2} = 16384\times 163840$,%
        $5n\times 5\frac{n}{2} = 81920\times 40960$,%
        $10n\times 5\frac{n}{2} = 163840\times 40960$,%
        }
    \end{axis}
\end{tikzpicture}
    \caption{Linear dependency of the runtime of the overall matrix multiplication to the number of \texttt{MatMul} evaluations. BFV parameters are: log$_2(p) = 42$, $n=16384$, $\kappa=128$.}
    \label{fig:runtime_dim}
\end{figure}

\subsubsection*{Real World Matrix Dimensions.}\label{sec:real_world_param}
In our benchmarks, we want to evaluate our protocol with parameters suitable for smaller nation states and set the matrix dimensions to $N$ being larger then the total population of small countries, and $k$ to be larger then the total number of cell towers in these countries. Concretely, we set $N=2^{23}$ and $k=2^{15}$, parameters enough to evaluate our protocol, for example, for Austria~\cite{austria1,austria2}, Singapore~\cite{enwiki:1002495194,singapore}, Kenya~\cite{wesolowski2012quantifying}, New York City, Paraguay or New Zealand. In \Cref{tab:benchmarks} we list the runtime for a homomorphic $(1\times 2^{23}) \times (2^{23}\times 2^{15})$ matrix multiplication, for different BFV parameters, using (at most) 96 threads. We also provide the total number of \texttt{MatMul} evaluations and the (maximum) number of evaluations per thread. We give performance numbers for a plaintext prime $p$ of size $42$\,bit, i.e., the smallest size to achieve $\nu=41$\,bit statistical privacy against malicious health authorities using our proving mask. To capture use cases, where a $42\,$bit plaintext modulus is not big enough, we also benchmark our protocol for a $60$\,bit prime $p$ (the maximum value supported by SEAL), providing $\nu=59\,$bit statistical security. Further, we also give the achieved statistical function privacy $\lambda_{\text{FP}}$ in bits for both benchmarks. As \Cref{tab:benchmarks} shows, the MNO's computation takes 70 minutes for a $42$\,bit plaintext prime and 1 hour 25 minutes for the bigger $60$\,bit prime.

\begin{table}[ht]
    \caption{Runtime for the MNO's computations for different parameters using $96$ threads. $N=2^{23}$, $k=2^{15}$, $\kappa=128$.}
    \label{tab:benchmarks}
    \centering
    \begin{tabular}{ccr|c|r|r}
        \toprule
        \multicolumn{3}{c|}{BFV} & \#\texttt{MatMul} & \centercellR{Time}    & \centercell{AWS}                                            \\
        log$_2(p)$               & \centercell{$n$}  & $\lambda_{\text{FP}}$ & $\,$total (thread) & \centercellR{min} & \centercell{price} \\
        \midrule
        $42$                     & $16384$           & $165$                 & $2048\,(24)$       & $69.33$           & \$$5.38$           \\ 
        $60$                     & $16384$           & $96$                  & $2048\,(24)$       & $83.23$           & \$$6.46$           \\ 
        \bottomrule
    \end{tabular}
\end{table}

\subsubsection*{Data Transmission.}
The data sizes which have to be transmitted between the MNO and the health authority are listed in \Cref{tab:communication}. Each row corresponds to a different parameter set from \Cref{tab:benchmarks}. The sizes were obtained by storing each of the described elements on the file system on the benchmarking platform. The table lists the size of the ciphertexts (ct), the public key (pk),  Galois keys (gk), and relinearization keys (rk). The public key is required for noise flooding to achieve function privacy, whereas Galois keys are required to perform homomorphic rotations. Each rotation index requires one Galois key, plus an additional key for rotating the columns. When using the BSGS algorithm, we need a key for the index $1$ to calculate $\rot(\vec{x}, j)$, and a key for the indices $k\cdot m_1,\, \forall 0 < k < m_2$. Also, for masking, we need the keys for the power-of-$2$ indices to calculate the inner product of two ciphertexts. The relinearization key is required to linearize the result of a ciphertext-ciphertext multiplication. We want to stress that the public key (pk), Galois keys (gk), and relinearization keys (rk) only need to be sent once before our protocol's first evaluation in a data-independent setup phase. Subsequent uses of the protocol can reuse these keys and only require transmitting the ciphertexts.

\begin{table}[ht]
    \centering
    \caption{Data transmission in MiB for parameters in \Cref{tab:benchmarks}.}
    \label{tab:communication}
    \begin{threeparttable}
        \begin{tabular}{ccccc|c|c}
            \toprule
            \multicolumn{5}{c|}{Health Authority} & MNO                              & Total                                                                                           \\
            ct                                    & \centercell{pk\tnotex{tnote:sp}} & \centercell{gk\tnotex{tnote:sp}} & rk\tnotex{tnote:sp} & \centercellR{Total} & ct    &          \\
            \midrule
            $445.9$                               & $1.0$                            & $557.5$                          & $7.8$               & $1012.2$            & $1.7$ & $1013.9$ \\
            $445.9$                               & $1.0$                            & $557.5$                          & $7.8$               & $1012.2$            & $1.7$ & $1013.9$ \\
            \bottomrule
        \end{tabular}
        \begin{tablenotes}
            \item\label{tnote:sp}One-time transmission (data-independent).
        \end{tablenotes}
    \end{threeparttable}
\end{table}

As \Cref{tab:communication} shows, health-authority-to-MNO communication is significantly more extensive than the response of the MNO. The main parts of the communication are the initial ciphertexts and the Galois keys.
One reason for the size difference between the ciphertexts in the query and the response is that the parameter $k$ is significantly smaller than $N$. Another reason is that our implementation performs a so-called modulus-switch after the computation, reducing the ciphertext modulus $q$ to only one of the moduli $q_i$ it is composed of. Further observe, that the plaintext modulus $p$ does not affect the communication cost.

\subsection{Price Estimation for Larger Countries}
Here we give an estimate of the costs of evaluating our protocol to create a COVID-19 heatmap for a larger country, more specifically, for Germany. About 83 million people live in Germany, and a total of 80000 cell sites are deployed~\cite{infozentrum}. With the BFV parameters of the first entry in \Cref{tab:benchmarks}, i.e., $n=16384,\nu=41, \kappa=128$, this corresponds to $n_v\cdot n_o=5066\cdot 10=50660$ \texttt{MatMul} evaluations.

To get $n_o = 10$ \texttt{MatMul} evaluations per thread, we would have to acquire $53$ CPU's capable of handling $96$ threads each. Assuming a runtime of $30\,$min per thread (calculated from \Cref{tab:benchmarks}), and a price of $\$4.656/h$ per CPU, we estimate the cost of evaluating the homomorphic matrix multiplication including the proving mask in a total time of $30\,$min to $\$124$ using AWS.\footnote{In practice, additional costs for handling the databases, network traffic, key management, human resources, among some other costs are to be expected.} This estimate shows that it is likely very feasible to create a heatmap once a week to gain valuable insight into the spread of the disease, even for larger countries. We, however, note that care has to be taken when outsourcing this computation to cloud providers to ensure user privacy in accordance to privacy regulations.

\section{Considerations and Conclusion}\label{sec:conlusion}

Our solution shows that privacy-preserving health data analytics is possible even on a national scale. We achieved this by combining three PETs. Each of them has their known limitations, but filtering out their strengths and applying them purposefully lead to a real-world cryptographic protocol. More broadly, we wanted to convey the following message: Even in times of crisis where it is tempting to lower data protection standards for purposes of big data analytics, there are technical methods to keep data protection standards high. And those technical methods are practical and available.

In the following we discuss considerations when instantiating our protocol with multiple health authorities or MNO's, as well as a summary of the key takeaways from a legal case study we conducted. More concretely, we focused on the EU General Data Protection Regulation (GDPR)~\cite{GDPR}, which is known to be on of the most strict privacy framework.

\subsubsection*{Multiple MNO's or Health Authorities.}
Even though it has already been shown that just using the largest MNO of a country for modelling disease dynamics is highly effective in practice~\cite{wesolowski2012quantifying}, one might consider to use data from multiple MNO's. Our protocol can easily be extended to this setting by performing the protocol with each MNO individually and summing up the resulting heatmaps. As long as DP parameters (\Cref{sec:epsilon}) are chosen, such that parameters $(w_i,\epsilon_i)$ for the $i$-th MNO, as well as $(\sum_i w_i, \sum_i \epsilon_i)$, fulfill the privacy-utility tradeoff, no additional information is leaked.

Multiple health authorities (e.g., for different provinces in a country) can be included using techniques from~\cite{DBLP:journals/popets/MouchetTBH21}. These multiple health authorities can agree on commen public keys, while keeping the decryption key hidden from all parties. After each health authority has agreed on database indices with the MNO (\Cref{rem:indices}), each authority can encrypt their queries using the common public key and the MNO can simply sum them up and proceed with the protocol as usual. After the protocol, the authorities proceed with the keyswitch protocol to output the final heatmap to some specified recipient (e.g., government officials). This adaptation is equivalent to the inital protocol with the same security and privacy guarantees, as long as each patient is registered with only one health authority. Otherwise, the heatmap will be a random output, due to the binary mask.

\subsubsection*{Legal Considerations.}
The health authority has used HE for COVID-19 positive individuals' ids, while the MNO has used DP to protected personal data. The MNO does not enter into possession of the decryption key of the health authorities data sets. Therefore, the computations performed should be considered carried out on anonymized data~\cite{spindler2016personal}, which are data that cannot identify, directly or indirectly, the data subject. In fact, data encrypted both by the health authority and the MNO is not accessible by an entity other than the one carrying out the encryption protocol. Hence, the data should be considered anonymized data -- whose processing falls out of the scope of application of the GDPR (Article 29 Working Party) -- for all other entities. A similar argument holds for the aggregated location data, which are protected from singling out attacks by DP~\cite{DBLP:journals/pnas/CohenN20,altman2020hybrid}.
Nevertheless, the processing of data by health authorities and MNO remains bound to GDPR provisions. In particular, the process to encrypt and make such data inaccessible is a processing activity under the GDPR. Thus, it should comply with legal requirements enshrined in the GDPR. In our use case, a lawful basis for processing personal data can be found in, e.g., Art. 9 (2) (i) GDPR, which deals with data processing in a public health context. It is one reasons why it is likely that there is a legal basis for our protocol.

Therefore, both MNO and health authority's processing activity protected through state-of-the-art PETs should be considered in compliance with GDPR provisions. From a legal perspective, the added value of the provided solution is represented on the one hand by the possibility to transform personal data into anonymized data. On the other hand, the processing activity of anonymizing data and limiting access to personal data ensure data subjects respect their fundamental rights as encoded in the EU privacy and data protection framework.

\section*{Acknowledgments}
We thank our shepherd Robert Cunningham for his constructive feedback and helpful insights for improving the paper. This work was supported by EU's Horizon 2020 project Safe-DEED under grant agreement \numero 825225, EU's Horizon 2020 project KRAKEN under grant agreement \numero 871473, and by the "DDAI" COMET Module within the COMET – Competence Centers for Excellent Technologies Programme, funded by the Austrian Federal Ministry for Transport, Innovation and Technology (bmvit), the Austrian Federal Ministry for Digital and Economic Affairs (bmdw), the Austrian Research Promotion Agency (FFG), the province of Styria (SFG) and partners from industry and academia. The COMET Programme is managed by FFG.

\bibliographystyle{splncs04}
\bibliography{bib,dblp}

\newpage

\appendix

\section{Security Proofs}\label{appendix:security}

We now prove security using the real-ideal-paradigm~\cite{DBLP:journals/ftsec/EvansKR18}. In this paradigm a protocol execution is secure if it behaves the same as when the parties send their input to a trusted third party (the ideal functionality) which does the computation and provides them with the outputs. More formally, an environment should not be able to distinguish between an observation of the protocol with a possible adversary and a simulator interacting with the ideal functionality. More specifically, most of the time, computational indistinguishability is required between the ideal and the real world. In contrast, we require $(\kappa,\nu)$-indistinguishability to analyze the cheating probability more thoroughly.

\begin{definition}[$(\kappa,\nu)$-indistinguishability~\cite{DBLP:journals/joc/LindellP12}]
    Let $X=\{X(a,\kappa,\nu)\}_{\kappa,\nu,\in\N,a\in\{0,1\}^*}$ and \\$Y=\{Y(a,\kappa,\nu)\}_{\kappa,\nu,\in\N,a\in\{0,1\}^*}$ be probability ensembles, so that for any $\kappa,\nu\in\N$ the distribution $\{X(a,\kappa,\nu)\}$ (resp. $\{Y(a,\kappa,\nu)\}$) ranges over strings of length polynomial in $\kappa+\nu$. We say that the ensembles are $(\kappa,\nu)$-indistinguishable if for every polynomial-time adversary $\mathcal{A}$, it holds that for every $a\in\{0,1\}^*$:
        \begin{equation*}
            \abs{\prob{\mathcal{A}\left(X=1\right)}-\prob{\mathcal{A}\left(Y=1\right)}}<\frac{1}{p(\kappa)}+2^{-\mathcal{O}(\nu)},
        \end{equation*}
        for every $\nu\in\N$, every polynomial $p(\cdot)$, and all large enough $\kappa\in\N$.
\end{definition}

\subsection{Binary Mask}\label{appendix:sec-masks}

\begin{lemma}\label{lem:masking_binary}
    Let $p$ be a integer of bit-length $\nu\in\N$, and let $N\leq 2^{\sfrac{\nu}{2}}$. Further, let $\vec{x}$ and $\mu_\mathtt{bin}$ be defined as in \Cref{sec:masking_binary}, then it holds that
    \[\prob{\vec{x}\;\mathtt{ not}\;\mathtt{binary}\;\land\;\mu_\mathtt{bin}=0}=\leq\frac{1}{2^{\nu-1}}.\]

\end{lemma}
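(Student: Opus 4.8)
The plan is to fix an arbitrary non-binary query vector $\vec{x}\in\Z_t^N$ and to bound the probability, taken over the server's randomness $y_1,y_2\stackrel{\$}{\gets}\Z_t$ and $r_1,r_2\stackrel{\$}{\gets}\Z_t\setminus\{0\}$, that $\mu_\mathtt{bin}=0$; throughout I assume, as in the implementation, that $t$ is prime so that $\Z_t$ is a field. First I would rewrite each inner product as a univariate polynomial evaluation. Since $\vec{d}=\vec{x}-\vec{1}$, the $i$-th entry of $\vec{d}\circ\vec{y}^N$ is $(x_i-1)y^i$, so
\[
  \langle \vec{x}, \vec{d}\circ\vec{y}^N\rangle = \sum_{i=0}^{N-1} x_i(x_i-1)\,y^i =: g(y).
\]
The crucial observation is that in a field the coefficient $x_i(x_i-1)$ vanishes exactly when $x_i\in\{0,1\}$; hence $\vec{x}$ is non-binary if and only if $g$ is a \emph{nonzero} polynomial of degree at most $N-1$.

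Next I would bound a single evaluation. A nonzero polynomial of degree at most $N-1$ over the field $\Z_t$ has at most $N-1$ roots, so for uniform $y_i$ we get $p_0:=\prob{g(y_i)=0}\leq (N-1)/t$. This is the same Schwartz--Zippel-type step already used for the single mask $\mu_\mathtt{bin'}$.

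The core of the argument is combining the two independent checks that make up $\mu_\mathtt{bin}=g(y_1)r_1+g(y_2)r_2$, where $(y_1,r_1)$ and $(y_2,r_2)$ are independent. I would split on whether the two evaluations vanish. If both $g(y_1)=g(y_2)=0$ then $\mu_\mathtt{bin}=0$ deterministically, which happens with probability $p_0^2$ by independence. If exactly one of them vanishes, say $g(y_1)\neq 0=g(y_2)$, then $\mu_\mathtt{bin}=g(y_1)r_1\neq 0$ because $g(y_1)$ and $r_1$ are nonzero and $\Z_t$ has no zero divisors, so this case never produces a zero mask. If both are nonzero, then $\mu_\mathtt{bin}=0$ is equivalent to $r_1=-g(y_2)g(y_1)^{-1}r_2$, i.e. $r_1$ must hit one fixed nonzero target, which happens with probability $1/(t-1)$ over $r_1\stackrel{\$}{\gets}\Z_t\setminus\{0\}$. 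Collecting the cases yields
\[
  \prob{\mu_\mathtt{bin}=0}=p_0^2+\frac{(1-p_0)^2}{t-1}\leq \left(\frac{N-1}{t}\right)^2+\frac{1}{t-1}.
\]

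Finally I would substitute the hypotheses. The bound $N\leq 2^{\nu/2}$ is exactly what forces $(N-1)^2/t^2\lesssim 1/t$, so that with $t\approx 2^{\nu}$ both summands are at most about $2^{-\nu}$ and their sum is at most $1/2^{\nu-1}$, as claimed. I expect the main conceptual obstacle to be the three-way case analysis for the combined mask, since this is where the quadratic improvement $p_0^2$ (rather than $p_0$) appears and where the fresh scalars $r_1,r_2$ over $\Z_t\setminus\{0\}$ introduce the $1/(t-1)$ term -- the latter being the new dominant term that degrades the guarantee from $\nu-\log_2 N$ to roughly $\nu-1$ bits. The remaining delicate points are purely bookkeeping: keeping track of the field assumption on $t$ (needed for the root count, for the equivalence $x_i(x_i-1)=0\Leftrightarrow x_i\in\{0,1\}$, and for the absence of zero divisors) and absorbing the small constant slack into the stated bound.
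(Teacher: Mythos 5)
Your proof is correct and follows essentially the same route as the paper: a Schwartz--Zippel-type bound for a single randomized evaluation, followed by a case split on whether both, one, or neither of the two independent checks vanish, and then the same arithmetic using $N\leq 2^{\sfrac{\nu}{2}}$ and $t\approx 2^\nu$. You are in fact somewhat more careful than the paper's own argument (explicit polynomial $g(y)$ of degree $\leq N-1$ in place of the Bulletproofs citation, the correct $1/(t-1)$ for $r_1$ uniform over $\Z_t\setminus\{0\}$, and making explicit the field assumption on $t$ that the paper uses implicitly), and none of these refinements affects the claimed bound.
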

\begin{proof}
    \begin{align*}
        \mu_\mathtt{bin}= & \underbrace{\langle \vec{x}, (\vec{d} \circ \vec{y_1}^N)\rangle \cdot r_1}_{:=\alpha} + \underbrace{\langle \vec{x}, (\vec{d} \circ \vec{y_2}^N) \rangle \cdot r_2}_{:=\beta}=\alpha+\beta
    \end{align*}
    We are now interested in the events when the binary mask evaluates to zero even though $\vec{x}\notin\Z_2^N$. This undesired behaviour can only happen in two ways, either $\alpha=\beta=0$ or $\alpha=-\beta$. Next, we calculate the probability of these two cases.

    First, since $r_1,r_2\neq 0$ and assuming $\vec{x}\neq \vec{0}^k$ ($\vec{x}= \vec{0}^k$ is a valid input and should result in a zero mask), we have $\prob{\alpha=0}=\prob{\beta=0}=\sfrac{N}{p}$~\cite{DBLP:conf/sp/BunzBBPWM18}. Hence,
    \begin{equation}\label{eq:prob_allzero}
        \prob{\alpha=\beta=0}=\frac{N}{p}\cdot\frac{N}{p}=\frac{N^2}{p^2}.
    \end{equation}
    Consequently, the probability of $\alpha$ being non-zero is $1-\sfrac{N}{p}$. Further, the probability of $\beta$ being $-\alpha$ is $\sfrac{1}{p}$. Combing these probabilities gives us
    \begin{equation}\label{eq:prob_inverse}
        \prob{\alpha=-\beta}=\left(1-\frac{N}{p}\right)\frac{1}{p}=\frac{1}{p}-\frac{N}{p^2}.
    \end{equation}
    We get the final probability by putting together \Cref{eq:prob_allzero} and \Cref{eq:prob_inverse}
    \begin{align*}
        \prob{\alpha+\beta=0} & =\frac{N^2}{p^2}+\frac{1}{p}-\frac{N}{p^2}<\frac{1}{p}+\frac{N^2}{p^2}                                               \\
                              & \leq\frac{1}{2^\nu}+\frac{2^\nu}{2^{2\nu}}=\frac{1}{2^{\nu-1}},\vspace{3mm}\text{ because }N\leq 2^{\sfrac{\nu}{2}}.
    \end{align*}
\end{proof}

\subsection{Proof of \Cref{lem:semi}}\label{appendix:proof_semi}

\begin{figure}[h]
    \centering
    \resizebox{0.95\columnwidth}{!}{
        \fbox{
            \parbox{\columnwidth}{
                \center{\textbf{\large{$\pi_{Hmap}$}}}
                \begin{enumerate}
                    \item A party $P_1$ on input $(\ifc{input},sid,P_1,P_2,\vec{x})$ from the environment verifies that $\vec{x}\in \Z^{N}_p$, else ignores the input. Next, samples a key pair $(\pk,\sk)\gets\kgen(1^{\kappa})$, and computes $\vec{c}\gets\enc_{\pk}(\vec{x})$. It records $(sid,P_1,P_2,\sk)$, and sends $(sid,P_1,P_2,\pk,\vec{c})$ to $P_2$. $P_1$ ignores subsequent inputs of the form $(\ifc{input},sid,P_1,P_2,\cdot)$ from the environment.
                    \item On a later input of the form $(sid,P_1,P_2,\vec{h}^*)$ from $P_2$, $P_1$ computes $\vec{h}\gets\dec_{\sk}(\vec{h}^*)$, and outputs $(\ifc{result},sid,P_1,P_2,\vec{h})$ to the environment.
                    \item A party $P_2$ on input $(\ifc{input},sid,P_1,P_2,Z)$ from the environment and $(sid,P_1,P_2,\pk,\vec{c})$ from $P_1$ verifies that $Z\in \Z^{N\times k}_p$, else ignores the input. Next, computes the mask vector $\vec{\mu}$ and the noise $\vec{\delta}$ according to \Cref{fig:final_protocol}. Then computes $\vec{h}^*\gets\eval_{\pk}(\vec{c}^T\cdot Z+\vec{\delta}+\vec{\mu})$. $P_2$, sends $(sid,P_1,P_2,\vec{h}^*)$ to $P_1$ and ignores all subsequent inputs of the form $(\ifc{input},sid,P_1,P_2,\cdot)$ from the environment.
                \end{enumerate}
            }
        }
    }
    \caption{Formalized protocol $\pi_{Hmap}$}
    \label{fig:protocol}
\end{figure}

\begin{figure}[h]
    \centering
    \resizebox{0.95\columnwidth}{!}{
        \fbox{
            \parbox{\columnwidth}{
                \center{\textbf{\large{$\simu_{Hmap}$}}}
                \begin{description}
                    \item[$P_1$, $P_2$ not corrupted:] It starts by sampling a key pair $(\pk,\sk)\gets\kgen(1^{\kappa})$, and sets $\vec{x}\gets 0^{N}$. Then it computes $\vec{c}\gets\enc_{\pk}(\vec{x})$. It then instructs $P_1$ to send $(sid,P_1,P_2,\pk,\vec{c})$ to $P_2$. On later input of the form $(sid,P_1,P_2,\pk,\vec{c})$ from $P_1$ it samples $Z\gets \Z_p^{N\times k}$. Then it computes $\vec{h}^*\gets
                              \eval_{\pk}(\vec{c}^T\cdot Z + \vec{\delta}+\vec{\mu})$. It instructs $P_2$ to send $(sid,P_1,P_2,\vec{h}^*)$ to $P_1$.
                    \item[$P_1$ not corrupted, $P_2$ corrupted:] Similar as before but it does not have to simulate $Z$ because it learns the input $Z$ from $P_2$. Then it computes $\eval_{\pk}(\vec{c}^T\cdot Z  + \vec{\delta}+\vec{\mu})$.
                    \item[$P_1$ corrupted, $P_2$ not corrupted:] It learns the input $\vec{x}$ from $P_1$. Then it proceeds as in the first case until it has to simulate the message to $P_1$. In order to do this it runs a copy of $\pi_{Hmap}$ internally, where it corrupts $P_1$. Thereby, it learns $\vec{x}^T\cdot Z + \vec{\delta}+\vec{\mu}$ and sets $\vec{h}^*\gets\enc_{\pk}(\vec{x}^T\cdot Z +\vec{\delta}+\vec{\mu})$.
                    \item[$P_1$, $P_2$ corrupted:] It learns the inputs $\vec{x}$ from $P_1$ resp. $Z$ from $P_2$. It runs the protocol with the inputs, and outputs $(\ifc{input},sid,P_1,P_2,\vec{x})$ and $(\ifc{input},sid,P_1,P_2,Z)$ to the ideal functionality, which makes $\mathcal{F}_{Hmap}$ output $(\ifc{result},sid,P_1,P_2,\vec{x}^T\!\!\cdot Z+\vec{\delta}+\vec{\mu})$.
                \end{description}
            }
        }
    }
    \caption{Simulator $\simu_{Hmap}$.}
    \label{fig:simulator}
\end{figure}

\begin{proof}
    We use \Cref{lem:masking_binary} to prove that to any polynomial time environment the execution $\pi_{Hmap}$ with a possible adversary $\mathcal{A}$ is $(\kappa,\nu)$-indistinguishable from a simulator $\simu$ interacting with the ideal functionality $\mathcal{F}_{Hmap}$. More concretely, we claim that as long as the event that $\vec{x}$ is not binary and at the same time the mask $\vec{\mu}=\vec{0}^k$ does not occur, the executions of the ideal and real world are computational indistinguishable. Once we have proven this claim, we are done, since we have already shown that the probability of the above event is exponentially small in the statistical security parameter. Note that for the proof, we have rewritten the protocol in a more formal description $\pi_{Hmap}$, see \Cref{fig:protocol}.

    First consider a polynomial time environment which does not corrupt any of the parties. Any meaningful environment will interact with $\pi_{Hmap}$ or $\mathcal{F}_{Hmap}$ in the following way.
    \begin{enumerate}
        \item It picks a vector $\!\vec{x}\!\in\!\Z^{n}_p$ and inputs $(\ifc{input},sid,P_1,P_2,\vec{x})$.
        \item It sees $(sid,P_1,P_2,\pk,\vec{c})$.
        \item It picks a matrix $Z\in \Z^{N\times k}_p$ and inputs $(\ifc{input},sid,P_1,P_2,Z)$.
        \item It sees $(sid,P_1,P_2,\pk,\vec{h}^*)$.
        \item It sees $(\ifc{result}, sid,P_1,P_2,\vec{h})$.
    \end{enumerate}
    Let us now assume to the contrary there is such an environment $\env$ that can distinguish the two systems $\pi_{Hmap}\circ\mathcal{A}$ and $\mathcal{F}_{Hmap}\circ\simu$ with non-negligible advantage. Then we can turn $\env$ into a polynomial time system $\env'$ which wins in the IND-CPA game with non-negligible probability:

    \begin{enumerate}
        \item $\env'$ receives $\pk$.
        \item $\env'$ runs $\env$ to see which message $(sid,P_1,P_2,\vec{x})$ gets recorded.
        \item $\env'$ inputs $(\vec{x},\vec{0}^{N})$ to the IND-CPA game and gets back an encryption $\vec{c}$, where $\vec{c}$ is either an encryption of $\vec{x}$ (if $b=0$) or an encryption of $\vec{0}^{N}$ (if $b=1$).
        \item $\env'$ samples $Z\gets \Z^{N}_p$. It runs $\env$ and provides input $(\ifc{input},sid,\allowbreak{}P_1,P_2,\vec{x})$, $(\ifc{input},sid,P_1,P_2,Z)$, $(sid,P_1,P_2,\pk,\vec{c})$, $(sid,P_1,P_2,\allowbreak{}\enc_{\pk}(\vec{c}^T\cdot Z + \vec{\delta}+\vec{\mu}))$ and $(\ifc{result},sid,P_1,P_2,\vec{x}^T\cdot Z + \vec{\delta}+\vec{\mu})$.
        \item $\env'\!$ waits until $\!\env\!$ outputs its guess $\!b'\!$, then $\!\env'\!$ outputs $\!b'\!$.
    \end{enumerate}

    If $b=0$, then $\env$ observes the interaction it would see when interacting with the protocol $\pi_{Hmap}$, and if $b=1$, then $\env$ observes the interaction it would see when interacting with the ideal functionality and the simulator $\mathcal{F}_{Hmap}\circ\simu$. By assumption $\env$ can distinguish $\pi_{Hmap}\circ\mathcal{A}$ and $\mathcal{F}_{Hmap}\circ\simu$ with non-negligible advantage. Therefore, $\env'$ will guess $b$ with probability significantly better than $\sfrac{1}{2}$. This is a contradiction to the IND-CPA security of $\he$, as $\env'$ is polynomial time.
\end{proof}

\subsection{One-Sided Simulation}

To define one-sided simulation security, we have the notion of a protocol execution view. Let $VIEW_{\pi,\mathcal{A}}^{\mathcal{A}}(x,y)$ denoted the protocol execution view of the adversary $\mathcal{A}$, i.e., the corrupted parties' view (input, randomness, all received messages) after execution of $\pi$ with input $x$ resp. $y$ from $P_1$ resp. $P_2$.

\begin{definition}
    Let $EXEC_{\pi,\mathcal{A},\mathcal{E}}$ resp. $EXEC_{\mathcal{F},\simu,\mathcal{E}}$ denote the random variables describing the output of environment $\mathcal{E}$ when interacting with an adversary $\mathcal{A}$ and parties $P_1$, $P_2$ performing protocol $\pi$, resp. when interacting with a simulator $\simu$ and an ideal functionality $\mathcal{F}$, where only $P_1$ receives output. Protocol $\pi$ securely realizes functionality $\mathcal{F}$ with one-sided simulation if
    \begin{enumerate}
        \item for any adversary $\mathcal{A}$ that controls $P_2$ there exists a simulator $\simu$ such that, for any environment $\mathcal{E}$ the distribution of $EXEC_{\pi,\mathcal{A},\mathcal{E}}$ and $EXEC_{\mathcal{F},\simu,\mathcal{E}}$ are indistinguishable,
        \item and for any adversary $\mathcal{A}$ controlling $P_1$ the distribution $\\VIEW_{\pi,\mathcal{A}}^{\mathcal{A}}(x,y)$ and $VIEW_{\pi,\mathcal{A}}^{\mathcal{A}}(x,y')$, where $|y|=|y'|$ are indistinguishable.
    \end{enumerate}
\end{definition}

\section{PSI-SUM with Indices} \label{app:psisum}
In \Cref{fig:ideal-psi-sum} we give the ideal functionality for a \textit{PSI-SUM with Indices} primitive. Such a primitve computes the sum of the private values associated with the intersection elements of two databases and reveals the indices present in the intersection to one party. This can be seen as an relaxed version of the \textit{Private Intersection-Sum with Cardinality} primitive introduced in~\cite{DBLP:conf/crypto/MiaoP0SY20}.

\begin{figure}[h]
    \centering
    \resizebox{0.95\columnwidth}{!}{
        \fbox{
            \parbox{\columnwidth}{
                \center{\textbf{\large{$\mathcal{F}_{PSI-Sum}$}}}\\
                \vspace{3mm}
                \raggedright
                Parameters: $t,N\in\N$. Parties $P_1$ and $P_2$.
                \begin{enumerate}
                    \item Upon receiving an input $(\ifc{input},sid,P_1,P_2,\vec{x})$ from a party $P_1$, verify that $\vec{x}\in \Z^{N}_p$, else ignore input. Next, record $(sid,P_1,P_2,\vec{x})$. Once $\vec{x}$ is recorded, ignore any subsequent inputs of the form $(\ifc{input},sid,P_1,P_2,\cdot)$ from $P_1$.
                    \item Upon receiving an input $(\ifc{input},sid,P_1,P_2,Z)$ from party $P_2$, verify that $Z\in \Z_p^{N\times *}$, else ignore input. Proceed as follows: If there is a recorded value $(sid,P_1,P_2,\vec{x},w)$, compute $\vec{h}\gets \vec{x}^TZ$ provided that $\vec{x}\in\Z_2^N$, otherwise $\vec{h}\stackrel{\$}{\gets}\Z^k_p$, and send $(sid,P_1,P_2,k)$ where $k$ is the number of columns of $Z$ to the adversary. Then output $(\ifc{result},sid,P_1,P_2,\vec{h})$ to $P_1$, and ignore subsequent inputs of the form $(\ifc{input},sid,P_1,P_2,\cdot)$ from $P_2$.
                \end{enumerate}
            }
        }
    }
    \caption{Ideal functionality of PSI-SUM with Indices.}
    \label{fig:ideal-psi-sum}
\end{figure}

\section{Differential Privacy}\label{appendix:diffpriv}

\begin{definition}[$\epsilon$-Differential Privacy~\cite{DBLP:conf/icalp/Dwork06}] \label{lem:diffpriv}
    A randomized mechanism $\mathcal{A}$ gives $\epsilon$-differential privacy if for any neighboring datasets $D$ and $D'$, and any $S \in Range(\mathcal{A})$: $Pr[\mathcal{A}(D) = S] \leq e^\epsilon Pr[\mathcal{A}(D') =S]$.
\end{definition}

One can achieve $\epsilon$-DP by adding noise from a zero-centered Laplace distribution to the final result of the computation. The noise is calibrated with the privacy budget $\epsilon$ and the global sensitivity $\Delta q$ of the computation $q$: $\Delta q = \max\limits_{D, D'}||q(D) - q(D')||$ for all neighboring $D$ and $D'$. The global sensitivity, thus, represents the maximum possible value of each element in the dataset. The Laplace distribution for a scale factor $b$ is given as $Lap(x|b) = \frac{1}{2b}e^{-\frac{|x|}{b}}$, where usually $b=\frac{\Delta q}{\epsilon}$.

\subsection{An economic method to choose \texorpdfstring{$\epsilon$}{epsilon}}\label{appendix:dp_ecnomic}
We aim to provide a confidence interval for the proportion $\mu$ of individuals in the general population (or subpopulation) with a specific property. Assume a database $D_N$ and let $g:D_N\to\R$ be the mechanism computing the sample mean with sensitivity $\sfrac{1}{N}$. If (for privacy reasons), we add Laplace noise $\nu$ to the outcome of $g$, we introduce an error source. Modeling each individual as a random variable with Bernoulli distribution allows us to bound this error by the tail bound. Hence, we can define the utility by a confidence interval with accuracy $T\in [0,1]$, and confidence $1-\alpha$ for $\alpha\in[0,1]$
\begin{equation*}
    \prob{\abs{g(D_N)+\nu(\epsilon)-\mu}\geq T}\leq e^{\left(-\frac{TN\epsilon}{2}\right)}\leq\alpha.
\end{equation*}
The idea of DP is that an individual's expected harm (cost) of being in the database should be minor. Let $E$ be the expected cost for an individual for being in the database (for a formal definition see~\cite{DBLP:conf/csfw/HsuGHKNPR14}). Then the individual's cost of being in the computation $g$ is
\[(e^{\epsilon}-1)E.\]
Let $D_{w}^j$ be the $j$-th column vector of the matrix $Z$, i.e., the location data corresponding to cell tower $j$. Then, we define the mechanism as
\[g(D_{w}^j):=\frac{\#\text{ individuals in }j}{w},\]
resulting in sensitivity $\sfrac{1}{w}$. This setup satisfies the assumption that each individual can be modeled as a Bernoulli experiment. This can be done for every cell tower, and thus covering the heatmap's area. The estimations of the expected baseline cost $E=\$0.01$ already cover the whole heatmap's area (all cell towers).

\end{document}